\newcommand{\taskset}{\ensuremath{\mathcal{T}}}
\newcommand{\nvipa}{\ensuremath{k}}
\newcommand{\abs}[1]{\ensuremath{\lvert #1 \rvert}}
\newcommand{\tourd}{\ensuremath{\Gamma}}
\newcommand{\capV}{\ensuremath{\text{Cap}}}
\newcommand{\circuit}{\ensuremath{C}}
\newcommand{\subline}{\ensuremath{L}}
\newcommand{\uprequests}{\ensuremath{U}}
\newcommand{\downrequests}{\ensuremath{D}}
\newcommand{\IG}{\mathcal{G(I)}}
\newcommand{\I}{\mathcal{I}}
\newcommand{\X}{\mathcal{X}}
\newcommand{\OPTTRAM}{\textsc{Opt-Tram}\xspace}
\newcommand{\comp}{\ensuremath{c}}
\newcommand{\nsub}{\ensuremath{n}}
\newcommand{\kp}{\ensuremath{\ell}}
\newcommand{\nreq}{\ensuremath{m}}
\newcommand{\mods}[1]{{\color{blue}#1}}
\newcommand{\redd}[1]{{\color{blue}#1}}
\newcommand{\sah}[1]{{\color{blue}#1}}
\newcommand{\Z}{\mathbb{Z}}
\begin{document}
\begin{center}
\begin{LARGE}
\textbf{Fleet management for autonomous vehicles:\\ Online PDP under special constraints\\} 
\end{LARGE}
\end{center}

\begin{center}
Sahar Bsaybes, Alain Quilliot and Annegret K.~Wagler\\
{Universit\'e Clermont Auvergne \\
(LIMOS UMR CNRS 6158), Clermont-Ferrand, France}  

\end{center}

\begin{abstract}
The VIPAFLEET project consists in developing models and algorithms for managing a fleet of Individual Public Autonomous Vehicles (VIPA). Hereby, we consider a fleet of cars distributed at 
specified stations in an industrial area to supply internal transportation, where the cars can be used in different modes of circulation (tram mode, elevator mode, taxi mode). 
One goal is to develop and implement suitable algorithms for each mode in order to satisfy all the requests 
under an economic point of view 
by minimizing the total tour length. 
The innovative idea and challenge of the project is 
to develop and install a dynamic fleet management system that allows the operator to switch between the different modes within the different periods of the day according to the dynamic transportation demands of the users.
We model the underlying online transportation system and propose a corresponding fleet management framework, 
to handle modes, demands and commands. 
We consider two modes of circulation, tram and elevator mode, propose for each mode appropriate online algorithms and evaluate their performance, both in terms of competitive analysis and practical behavior. 
\end{abstract}

\section{Introduction}

The project VIPAFLEET aims at contributing to sustainable mobility through the development of innovative urban mobility solutions by means of fleets of Individual Public Autonomous Vehicles (VIPA) 
allowing passenger
transport in 
closed sites like industrial areas, medical complexes, campuses, business centers, big parkings, airports and train stations. 
A VIPA is an ``autonomous vehicle'' that does not require a driver nor an infrastructure to operate,
it is developed by Easymile and Ligier \cite{easymile,ligier} thanks to innovative computer vision  guidance  technologies developed by  researchers  at  Institut  Pascal~\cite{royer2005outdoor,RFIA2016}.
This innovative project involves different partners in order to ensure the reliability of the transportation system \cite{viameca}. A long-term experimentation~\cite{RFIA2016} has been performed on the 
industrial site ``Ladoux'' of Michelin at Clermont-Ferrand as part of the FUI VIPAFLEET project.
A fleet of VIPAs shall be used in an industrial site 
to transport employees and visitors e.g. between parkings,
buildings and from or to the restaurant at lunch breaks.
The fleet is distributed at specified stations 
to supply internal transportation, and a VIPA can operate in three different transportation modes:
\begin{itemize}
\item \emph{Tram mode:} VIPAs continuously run on predefined lines or cycles in a predefined direction and stop at a station if requested to let users enter or leave. 
\item \emph{Elevator mode:} VIPAs run on predefined lines 
and react to customer requests by moving to a station to let users enter or leave, thereby changing their driving direction 
if needed. 
\item \emph{Taxi mode:} users book their transport requests (from any start to any destination station within the network with a start and an arrival time) in real time. 
\end{itemize}

This leads to a Pickup-and-Delivery Problem (PDP) where a fleet of servers shall transport goods or persons from a certain origin to a certain destination.  
If persons have to be transported, we usually speak about a Dial-a-Ride Problem (DARP). 
Many variants 
are studied 
including the Dial-a-Ride Problem with time windows \cite{deleplanque2013transfers,fabri2007online}. 
In our case, we are confronted with an online situation, where transportation requests 
are released over time \cite{ascheuer2000online,berbeglia2010dynamic,cordeau2007dial}.

In a practical context, different objective functions can be considered. 
We focus on the economic aspect where the objective is to minimize costs by minimizing the total tour length.

In a VIPAFLEET system, users can either call a VIPA directly from a station with the help of a call-box 
or can book their request in real time by mobile or web applications.
Since the  customer requests are released over time, the classical PDP does not reflect the real situation of this project.
Therefore we are interested in its online version. 
Our aim is to develop and install a Dynamic Fleet Management System that allows the operator to switch between different network designs, transportation modes and appropriate online algorithms within the different periods of the day in order to react to changing demands 
evolving during the day, with the objective to satisfy all demands in a best possible way.
For that we model the underlying online transportation system and propose an according fleet management framework, 
to handle modes, demands and commands (see Section 2). In Section 3 we
develop suitable online algorithms for tram and elevator mode and analyse them in terms of competitive analysis w.r.t minimizing the total tour length. 
In Section 4, we provide computational results showing that the practical behavior of the algorithms is much better than their worst case behavior captured by their competitive factors. This enables us to cluster the demands 
into subproblems in such a way that, for each subproblem, a suitable subnetwork and a suitable algorithm can be proposed leading to a globally good solution (transportation schedule).\\

\section{Problem description and model}
We embed the VIPAFLEET management problem in the framework of a metric task system.
We encode the closed site  where the VIPAFLEET system is running as a \emph{metric space} $M =(V,d)$ induced by a connected network $G = (V,E)$, where 
the nodes correspond to stations, arcs to their physical links in the closed site, and the distance $d$ between two nodes $v_i,v_j \in V$ is the length of a shortest path from $v_i$ to $v_j$. 
In $V$, we have a distinguished origin $v_o \in V$, the depot of the system where all $\nvipa$ VIPAs are parked when the system is not running, i.e., outside a certain time horizon $[0,T]$.

A Dynamic Fleet Management System shall allow the operator to switch between different transportation modes within the different periods of the day in order to react to changing customer demands 
evolving during the day. 
For that, for a certain period $[t,t']\subseteq[0,T]$, we define a 
metric subspace $M' =(V',d')$ induced by a subnetwork $G' = (V',E')$ of $G$, 
where a subset of nodes and arcs of the network is active (i.e. where the VIPAs have the right to perform a move on this arc or pass by this node during $[t,t']$).  
An operator has to decide when and how to move the VIPAs in the subnetworks, and to assign customer requests to VIPAs.

Any customer request $r_j$ is defined as a 4-tuple \sah{$r_j=(t_j,x_j,y_j,z_j)$} where  
\begin{itemize}
\item $t_j \in [0,T]$ is the release date, 
\item $x_j \in V$ is the origin node, 
\item $y_j \in V$ is the destination node, 
\item $z_j$ specifies the number of passengers. 
\end{itemize}

The operator monitors the evolution of the requests and the movement of VIPAs over time and creates tasks to move the VIPAs to go to some station and pick up, transport and deliver users.
More precisely, a \emph{task} can be defined as 
$\tau_j = (t_j,x_j,y_j,z_j,G^\prime)$. This task is sent at time $t_j$ to a VIPA operating in the subnetwork $G^\prime$ containing stations $x_j$ and $y_j$, indicating that $z_j$ passengers have to be picked up at $x_j$ and delivered at $y_j$.
In order to fulfill the tasks, we let a fleet of VIPAs (one or many, each having a capacity for $\capV$ passengers) circulate in the network inducing the metric space. 

More precisely we determine a feasible transportation schedule $S$ for $(M, \taskset)$ consisting of a collection of tours $\{\tourd^1, \ldots, \tourd^\nvipa \}$ where:
\begin{itemize}
 \item\label{def: enum: schedule: 1} each of the $\nvipa$ VIPAs has exactly one tour,
 \item\label{def: enum: schedule: 2} each request  $r_j$  is served not earlier than the time $t_j$ it is released,
 \item each tour starts and ends in the depot.
\end{itemize}

If each user is transported from its start station to its final destination by only one VIPA, then $S$ is called non-preemptive, otherwise preemptive. 
In particular, if start and destination station of a user do not lie on the same subnetwork $G^\prime$, the user has to change VIPAs on one or more intermediate stations. 
As this typically leads to inconveniences, the design of subnetworks has to be done in such a manner that preemption can be mostly avoided. 

In addition, depending on the policy of the operator of such a system, different technical side constraints have to be obeyed.
If two or many VIPAs circulate on the same (sub)network, the fleet management has to handle e.g. the 
\begin{itemize}
 \item meeting of two vehicles on a station or an arc,
 \item blocking the route of a VIPA by another one waiting at a station (if two VIPAs are not allowed to enter the same node or arc at the same time),
\end{itemize}
and has to take into account 
\begin{itemize}
 \item the events of breakdown or discharge of a vehicle,
 \item technical problems with the server, the data base or the communication network between the stations, VIPAs and the central server. 
\end{itemize}
Our goal is to construct transportation schedules $S$ for the VIPAs respecting all the above constraints w.r.t minimizing the total tour length.

This will be addressed by dividing the time horizon $[0,T]$ in different periods according to the volume and kind of the requests, and by providing specific solutions within each period.
The global goal is to provide a feasible transportation schedule over the whole time horizon that satisfies all requests and minimizes the total tour length (Dynamic Fleet Management Problem).
For each period, we partition the network $G = (V,E)$ 
into a set of subnetworks $G^\prime = (V^\prime,E^\prime)$.
The aim of this partition is 
to solve some technical side constraints for autonomous vehicles, 
and to solve the online PDP using different algorithms at the same time on different subnetworks. To gain precision 
in solutions by applying a suitable algorithm to a certain subnetwork, 
the choice of the design of the network in the industrial site where the VIPAs will operate is dynamic and will change over time according to the technical features and properties.

In a metric 
space we partition the network 
into subnetworks $G^\prime$ that are either unidirected cycles, called \textit{circuits} $G^\prime_c$, or bidirected paths, called \textit{lines} $G^\prime_{\ell}$. This partition is motivated by two of the operation modes 
for VIPAs, tram mode and elevator mode, that we will consider in the sequel of the paper.

\section{Scenarios: combinations of modes and subnetworks}
Based on all the above technical features and properties that have an impact on the feasibility of the transportation schedule, we can cluster the requests into subproblems, apply to each subproblem a certain algorithm, and check the results in terms of feasibility and performance.

The choice of the design of the network in the industrial site  where the VIPAs will operate is dynamic and will change over time according
to the technical features and properties. We consider four typical scenarios that occurred while operating a fleet in an industrial site\footnote{A long-term 
experimentation~\cite{RFIA2016} has been performed in the industrial site  ``Ladoux'' of Michelin at Clermont-Ferrand where
a fleet of VIPAs was operating for several months (October 2014 - February 2015).}
based on some preliminary studies of the transport requests within the site.
\paragraph{Morning/evening:} 
The transport requests are between parkings and buildings. 
For this time period we propose the following:
 \begin{itemize}
  \item [$\bullet$] Design a collection of subnetworks (lines and circuits) s.t.
  \begin{itemize}
   \item [-] all buildings and parkings are covered,
   \item [-] each subnetwork contains one parking $p$ and all the buildings where $p$ is the nearest parking (to ensure that for each request, origin (the parking) and destination (a building) lie in the same subnetwork).
  \end{itemize}
  \item [$\bullet$] Depending on the number of employees in the served buildings, assign one VIPA (in elevator mode) to every line and one or several VIPAs (in tram mode) to each circuit.
 \end{itemize}

\paragraph{Lunch time:}
The transport requests are between buildings and the restaurant of the industrial complex. 
For this time period, we propose the following:
 \begin{itemize}
  \item [$\bullet$] Design a collection of lines s.t.
  \begin{itemize}
   \item [-] all buildings are covered,
   \item [-] each line contains the station of the restaurant (to ensure that for each request, to or from the restaurant, origin and destination lie in the same line).
  \end{itemize}
  \item [$\bullet$] Depending on the number of employees in the served buildings, assign one VIPA (in elevator mode) or one or several VIPAs (in tram mode) to the lines.
 \end{itemize} 
\paragraph{Emergency case:} 
In the case of a breakdown of the central servers, the database or the communication system, transports between all possible origin/destination  pairs have to be ensured without any decision by the operator. For that we propose
 \begin{itemize}
  \item to use one Hamilton cycle through all the stations as subnetwork and 
  \item to let half of the fleet of VIPAs operate in each direction on the cycle (all in tram mode).
 \end{itemize}
\paragraph{Other periods:} 
There are mainly unspecified requests without common origins or common destinations. 
 The operator can use all VIPAs in his fleet in taxi mode on the complete network or design lines and circuits s.t. all stations are covered and the chosen subnetworks intersect (to ensure transports between all 
 possible origin/destination pairs). E.g., 
this can be done by
 \begin{itemize}
  \item using one Hamilton cycle through all stations where half of the fleet operates (in tram mode) in each direction,
  \item a spanning collection of lines and circuits meeting in a central station where one VIPA (in elevator mode) operates on each line, one or several VIPAs (in tram mode) on each circuit.
 \end{itemize}
\begin{remark}
One of these combinations of subnetworks has been used in a  long  duration  
experimentation~\cite{RFIA2016} on the industrial site ``Ladoux'' of Michelin at Clermont-Ferrand.
VIPAs were operating 
using the tram mode on a unidirected circuit where some of the buildings and parkings of the industrial site were covered.
\end{remark}

\section{Online algorithms and competitive analysis}
Recall that the customer requests are released over time s.t. the studied PDP has to be considered in its online version. 
A detailed introduction to online optimization can be found e.g. in the book by Borodin and El-Yaniv \cite{borodin2005online}.
It is standard to evaluate the quality of online algorithms with the help of competitive analysis.
This can be viewed as a game between an online algorithm $ALG$ and a malicious adversary who tries to generate a
worst-case request sequence $\sigma$ which maximizes the ratio between the online cost $ALG(\sigma)$ and the optimal offline cost $OPT(\sigma)$
knowing the entire request sequence $\sigma$ in advance.
 $ALG$ is called \emph{$\comp$-competitive} if $ALG$ produces for any request sequence $\sigma$ a feasible solution 
 with
\[
 ALG(\sigma)  \leq \comp \cdot OPT (\sigma)
\]
for some given $\comp \geq 1$. The competitive ratio of ALG is the infimum over all $\comp$ such that $ALG$ is $\comp$-competitive.
Here, we are interested in designing and analyzing online algorithms for two possible operating modes of a VIPA, tram mode and elevator mode.
\subsection{Tram mode}
The tram mode is the most restricted operation mode where VIPAs run on predefined circuits in a predefined direction and stop at stations to let users enter or leave.
 We consider circuits $\circuit$ with one distinguished node, the origin of the circuit\footnote{Circuits can also be lines where one end is the origin and the VIPA can change its direction only in the two ends of the line.}.

\paragraph{Optimal offline solution via colorings of interval graphs:}
An interval graph $\IG$ is obtained as intersection of a set $\I$ of intervals within a line segment, where
\begin{itemize}
 \item the nodes of $\IG$ represent the intervals in $\I$,
 \item the edges of $\IG$ represent their conflicts in terms of overlaps (i.e. two nodes are adjacent if the corresponding intervals have a non-empty intersection).
\end{itemize}
The clique number $w(\IG)$ corresponds to the largest number of pairwise intersecting intervals in $\I$, a coloring corresponds to an assignment of colors to intervals such that no two intersecting intervals receive the same color. 
In all graphs, the clique number is a lower bound on the minimum number $\X(\IG)$ of required colors. For interval graphs it was shown in~\cite{olariu1991optimal} that the following Greedy coloring algorithm 
always produces an $w(\IG)$-coloring of $\IG$:
\begin{itemize}
 \item sort all intervals in $\I$ according to their left end points.
 \item color the nodes of $\IG$ in this order: starting with the first node, assign to each node the smallest color that none of its already colored neighbors has.
\end{itemize}

We next interpret the offline solution for VIPAs operating in tram mode on a circuit in this context.
We have given a circuit $\circuit=\{v_o,v_1, \ldots, v_{\nsub}\}$ and a sequence $\sigma$ of $\nreq$ requests $r_j=(t_j,x_j,y_j,z_j)$ with origin/destination pairs $(x_j,y_j)\in \circuit \times \circuit$. 
W.l.o.g. we may assume that the origin $v_0$ of the circuit does not lie in the interior of $(x_j,y_j)$ for any $r_j \in \sigma$.
We transform $\circuit$ into a path $P=\{v_0,v_1,\ldots , v_{\nsub}, v_0\}$ having the origin $v_0$ of $\circuit$ as start and end node (as the line segment), and we split each request $r_j$ into $z_j$
many uniform requests (resp. single passengers), interpreted as subpaths $(x_j,y_j) \subseteq P$ (to obtain the (multi) set $\I$ of intervals).
By construction, we have for the resulting interval graph $G_{\sigma}=\IG$:
\begin{itemize}
 \item the clique number $w(\IG)$ corresponds to the maximum number of requests $r_j$ in $\sigma$ (counted with their multiplicities $z_j$) traversing a same edge of $P$,
 \item a coloring of $G_{\sigma}$ corresponds to an assignment of places in the VIPA(s) to passengers.
\end{itemize}

Clearly one VIPA can serve all (uniform) requests from up-to $\capV$ color classes in a single subtour traversing $\circuit$. We can, thus, turn any coloring of $G_{\sigma}$ into a feasible transportation schedule by
\begin{itemize}
 \item waiting until time $t_{\nreq}$ in the origin $v_0$ ( to ensure that all requests are released before they are served)
 \item selecting up to $\capV$ many color classes and assigning the corresponding uniform requests (i.e. single passengers) to one VIPA, to be served within the same subtour traversing $\circuit$, until all requests are served.
\end{itemize}
This leads to the following algorithm to compute optimal offline solutions for the tram mode: \\ 
\OPTTRAM\ \\
Input: $\sigma=\{r_1,r_2,\ldots, r_{\nreq}\}$, $\circuit=\{v_o,v_1, \ldots, v_{\nsub}\}$, $\capV$ and $\nvipa$\\
Output: transportation schedule
\begin{enumerate}
 \item for each $r_j=(t_j,x_j,y_j,z_j) \in \sigma$: create $z_j$ many intervals $(x_j,y_j)$ to obtain $\I$,
 \item sort all intervals in $\I$ according to their left end points,
 \item create the interval graph $\IG$ and apply the Greedy algorithm to color it,
 \item wait until $t_{\nreq}$ (the release date of the last request),
 \item as long as there are unserved requests:
 \item []select $\capV$ many (or all remaining) color classes, assign the corresponding passengers to a VIPA and perform one subtour traversing $\circuit$ to serve them.
\end{enumerate}

\begin{example}
\label{exa: opt: tram-ttl}

 Consider a circuit $\circuit=(a,b,c,d,e)$ with origin $a$ and one unit-speed server with capacity \mbox{$\capV=2$} 
(i.e. a VIPA that travels 1 unit of length in 1 unit of time), and  a sequence $\sigma$ of $6$ requests:
  \[r_1=(1,c,e,2)\hspace{0.5cm} r_4=(4,b,c,2)\]\\[-1cm]
  \[r_2=(2,a,d,1)\hspace{0.5cm} r_5=(5,a,b,1)\]\\[-1cm]
  \[r_3=(3,d,e,1)\hspace{0.5cm} r_6=(6,b,e,1)\]\\[-6mm]
\newpage

\begin{enumerate}
\item for each $r_j \in \sigma$ we create $z_j$ many intervals $(x_j,y_j)$ to obtain $\I$:
\vspace{-0.5cm}
\begin{figure}[ht]
    \centering
    \includegraphics[width=0.7\textwidth]{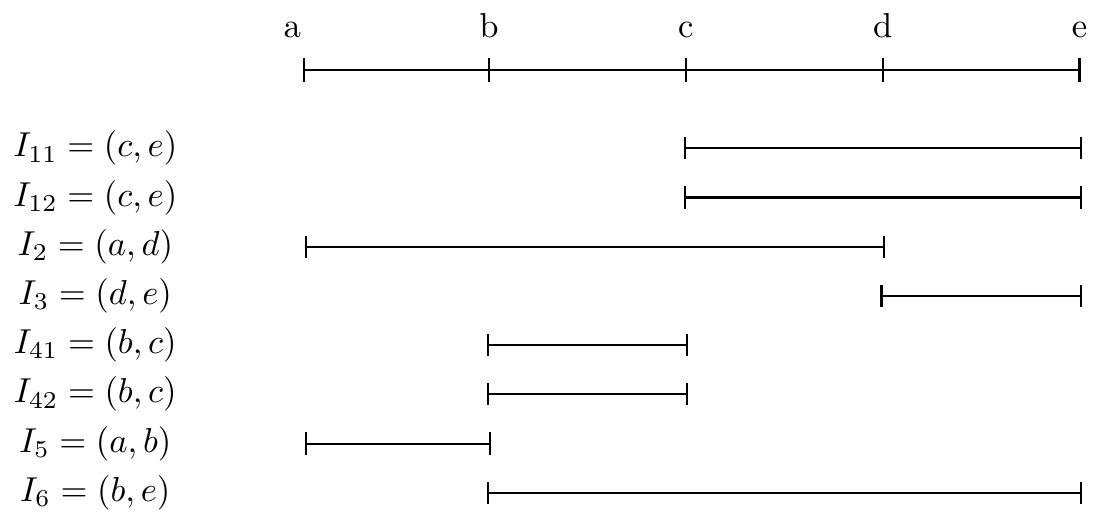}
 \label{fig: opt-tram-interval-coloring}
\end{figure}
\item We sort all intervals in $\I$ according to their left end points:
$I_5, I_2, I_{41}, I_{42}, I_6, I_{11}, I_{12}, I_3$.
\item We create the interval graph $\IG$: 
\begin{figure}[ht]
    \centering
    \includegraphics[width=0.7\textwidth]{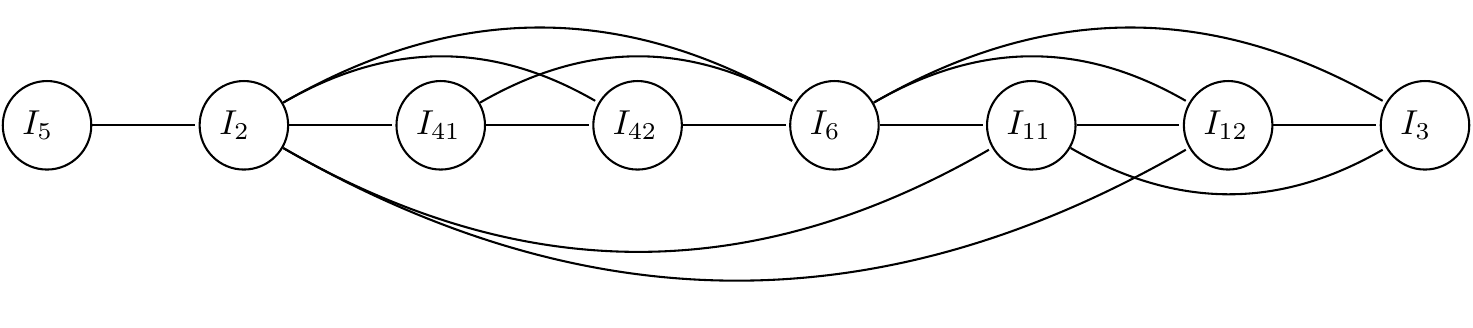}
 \label{fig: opt-tram-interval-graph}
 \vspace{-0.5cm}
\end{figure}
\item We apply the Greedy algorithm to color it: 
\begin{figure}[ht]
    \centering
    \includegraphics[width=0.7\textwidth]{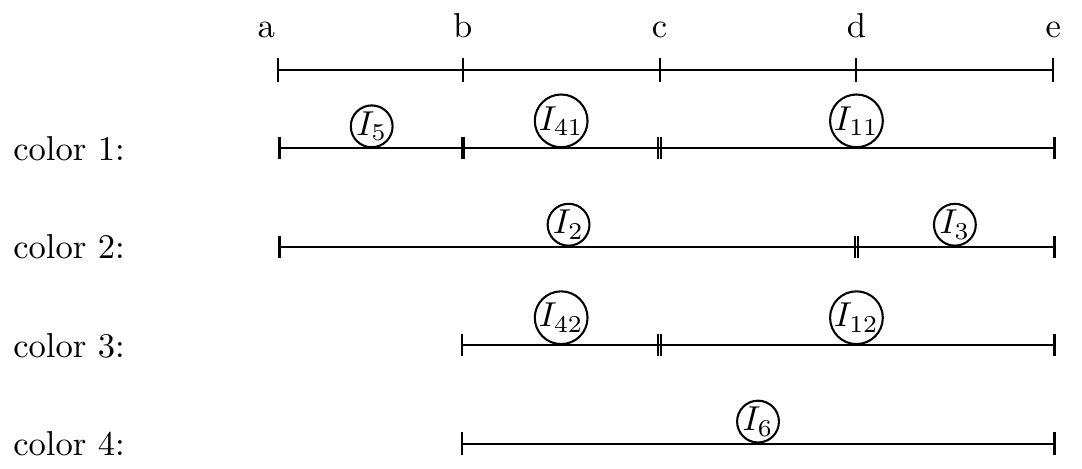}
 \label{fig: opt-tram-greedy-coloring}
\end{figure}
\item As long as there are unserved requests, we select 2 random color classes, assign the corresponding passengers to a VIPA and perform one subtour traversing $\circuit$ to serve them, for instance: 
\begin{itemize}
 \item first VIPA, first round: color 1 and 2 ($r_5,r_2,r_{41},r_{11},r_3$)
 \item second VIPA, or second round of first VIPA: color 3 and 4 ($r_{42},r_{12},r_6$)
\end{itemize}
\end{enumerate}
\end{example}
\begin{theorem}
 Algorithm \OPTTRAM\ provides a load-preemptive optimal offline solution w.r.t. minimizing the total tour length for VIPAs operating in tram mode on a circuit.
\end{theorem}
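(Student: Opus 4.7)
My plan is to show that \OPTTRAM\ attains the tight lower bound $\lceil w(\IG)/\capV \rceil$ on the number of full circuit traversals that any feasible load-preemptive schedule must perform, which coincides with the total-tour-length lower bound since each traversal contributes exactly one copy of the length of $\circuit$.

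First, I would establish the lower bound. In tram mode every VIPA traverses the unidirected circuit $\circuit$ only in the fixed direction, so its tour (which starts and ends at the origin of $\circuit$) consists of an integer number of full traversals of $\circuit$, and hence has length equal to an integer multiple of the length of $\circuit$. Let $e$ be an edge of $\circuit$ covered by $w(\IG)$ of the passenger-intervals in $\I$; such an edge exists by the interpretation of the clique number given in the preceding paragraphs. Any single VIPA can carry at most $\capV$ passengers across $e$ in one traversal, because of its capacity. Even in a load-preemptive schedule, every one of the $w(\IG)$ passengers whose interval contains $e$ must be carried across $e$ exactly once by some VIPA. Summing over all VIPAs and all their traversals shows that the total number of traversals of $\circuit$ performed by the whole fleet is at least $\lceil w(\IG)/\capV \rceil$, giving the desired lower bound on total tour length.

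Next, I would analyse the output of \OPTTRAM\ directly. By the Olariu result cited in the excerpt, the Greedy step produces a proper coloring of $\IG$ with $\X(\IG) = w(\IG)$ colors. Within any one color class, the intervals are pairwise disjoint, so the corresponding passengers can be served by one VIPA on a single subtour of $\circuit$ with at most one of them on board at any time. Bundling the $w(\IG)$ color classes into $\lceil w(\IG)/\capV \rceil$ groups of at most $\capV$ classes each therefore yields a feasible assignment in which every subtour respects capacity $\capV$, each passenger is served in a single VIPA trip (i.e., load-preemption is respected), and every passenger is picked up no earlier than her release date since service is postponed until time $t_{\nreq}$. The number of subtours performed by \OPTTRAM\ matches the lower bound, so the total tour length is optimal.

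The main obstacle I anticipate is making the lower bound fully rigorous in the load-preemptive, multi-vehicle regime, where one must rule out any subtle savings obtained by splitting a single multi-passenger request across several VIPAs or staggering partial subtours. I would handle this by localizing the argument to the bottleneck edge $e$: load-preemption is irrelevant there because each of the $w(\IG)$ passengers crossing $e$ contributes one unit of load to exactly one VIPA traversal of $e$, and the capacity $\capV$ of each traversal is rigid, so no rerouting or re-assignment can reduce the number of traversals below $\lceil w(\IG)/\capV \rceil$.
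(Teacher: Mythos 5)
Your proposal is correct and follows essentially the same route as the paper: interpret passengers as intervals, use the Greedy $w(\IG)$-coloring, bundle $\capV$ color classes per subtour, and match the lower bound $\lceil w(\IG)/\capV\rceil\cdot\abs{\circuit}$. The only difference is that you make the lower bound explicit via the bottleneck-edge/capacity argument (and the observation that every tour on a unidirected circuit is an integer number of full traversals), where the paper simply asserts it as clear; you could additionally note that the schedule is called \emph{load-preemptive} because the $z_j$ passengers of a single request may land in different color classes and hence in different subtours or VIPAs.
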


\begin{proof}
 By construction, splitting the requests $r_j$ from $\sigma$ according to their multiplicities $z_j$ gives a (multi)set $\I$ of intervals where each of them stands for a uniform request of a single passenger. 
 Accordingly, the clique number $w(\IG)$ of the resulting interval graph $\IG$ corresponds to the maximum number of passengers traversing a same edge $e$ of the circuit $\circuit$, that is
 \[
  w(\IG)= max\{\sum_{e\in (x_j,y_j),r_j \in \sigma} z_j;  e\in \circuit \}
 \]
On the other hand, a coloring of $\IG$ corresponds to an assignment of places in the VIPA(s) to passengers, and subtours for the VIPA(s) can be easily created by repeatedly choosing $\capV$ color classes and 
assigning the passengers colored that way to the $\capV$ places of one VIPA. Clearly, at least $\left \lceil { \frac{w(\IG)}{\capV} } \right\rceil$ many such subtours are needed to serve all requests.
The transportation schedule obtained is feasible because, by waiting until $t_{\nreq}$ to start any subtour, we ensure that all requests have been released before.
As the Greedy coloring algorithm provides an optimal $w(\IG)$-coloring of $\IG$, we can guarantee to obtain a feasible transportation schedule performing the minimal number of subtours by always choosing $\capV$ colors
(except for the last subtour where we choose all remaining ones) so that the minimal total tour length equals
\[
 OPT(\sigma)=\left \lceil { \frac{w(\IG)}{\capV} } \right\rceil \cdot \abs{\circuit}.
\]
The resulting solution is a load-preemptive transportation schedule because it cannot be ensured that all $z_j$ passengers coming from the same request $r_j$ are served by the same VIPA (even if $z_j \leq \capV$ holds).
\hfill$\square$
\end{proof}

\begin{remark}\hspace{0.2cm}\\[-0.7cm]
\begin{enumerate}
 \item [$\bullet$]The minimal total tour length does not depend on the number of VIPAs used to serve all requests.
 \item [$\bullet$]Algorithm \OPTTRAM\ is clearly polynomial because  all the steps of the algorithm can be computed in polynomial time. 
 \item [$\bullet$]By not selecting $\capV$ color classes randomly to create subtours, it is possible to:
  \subitem reduce load-preemption,
  \subitem minimize the number of stops performed to let passengers leave/enter a VIPA,
  \subitem handle the case of more than one VIPA,\\
  but the so modified algorithm is not necessarily polynomial anymore.
\end{enumerate}
\end{remark}

\paragraph{Online algorithms:}
For the online situation, we propose the following simple algorithm for VIPAs operating in tram mode on a circuit $\circuit$:\\

\noindent
SIR (``Stop If Requested'')\\[-6mm] 
\begin{itemize}
  \item each VIPA waits in the origin of $\circuit$; 
as soon as a request is released, a VIPA starts a full subtour in a given direction, thereby it stops at a station when a user requests to enter/leave.
 \end{itemize}
In fact, in tram mode, the possible decisions of the VIPA are either to continue its tour or to wait at its current position for newly released requests.
This can be used by the adversary to ``cheat'' an online algorithm, in order to maximize the ratio between the online and the optimal costs.
Here, the strategy of the adversary is to force SIR to serve only one uniform request per subtour, whereas the adversary only needs a single subtour traversing $\circuit$ to serve all requests.

\begin{example}
\label{exa: comp: SIR: CAPC-totaltourlength}
Consider a circuit $\circuit=(v_0,v_1,\dotsc,v_{\nsub})$ with origin $v_0$, a unit distance between $v_i$ and $v_{i+1}$ for each $i$, and one unit-speed server with capacity $\capV$.
The adversary releases  a sequence $\sigma$ of $\capV \cdot n$ uniform requests that force SIR to perform one full round (subtour) of length $\abs{C}=\nsub+1$ for each uniform request, 
whereas the adversary is able to serve all requests in a single subtour (fully loaded on each edge):
\begin{itemize}
 \item $\capV$ requests $r_j=((j-1)\abs{C},v_0,v_1,1)$ for $1 \leq j \leq \capV$
 \item $\capV$ requests $r_j=((j-1)\abs{C},v_1,v_2,1)$ for $\capV+1 \leq j \leq 2\capV$\\
 $\vdots$
 \item $\capV$ requests $r_j=((j-1)\abs{C},v_{\nsub-1},v_{\nsub},1)$ for $(\nsub-1)\capV+1 \leq j \leq \nsub\capV$
 \item $\capV$ requests $r_j=((j-1)\abs{C},v_{\nsub},v_0,1)$ for $\nsub\capV+1 \leq j \leq (\nsub+1)\capV$
\end{itemize}
SIR starts its VIPA at time $t=0$ to serve $r_1=(0,v_0,v_1,1)$ and finishes the first subtour of length $\abs{C}$ without serving any further request.
When the VIPA operated by SIR is back to the origin $v_0$, the second request $r_2=(\abs{C},v_0,v_1,1)$ is released and SIR starts at $t=\abs{C}=\nsub+1$ a second subtour of length $\abs{C}$ to serve $r_2$,
without serving any further request in this subtour. This is repeated for each request yielding
 $ SIR(\sigma)=\capV \cdot \abs{C} \cdot \abs{C}.$
 
The adversary waits at the origin $v_1$ until $t=(\capV -1)\abs{C}$ and serves all requests $r_1,\dotsc,r_{\capV}$ from $v_0$ to $v_1$. Then he waits until $t=(2\capV -1)\abs{C}$ at $v_1$ and serves all requests 
$r_{\capV+1},\dotsc, r_{2\capV}$ from $v_1$ to $v_2$. This is repeated for all $\capV$ requests from $v_i$ to $v_{i+1}$, yielding 
$OPT(\sigma)=\abs{C}.$
The tours performed by SIR and OPT are illustrated in Fig~\ref{fig: SIR-worst-case-ten}.
Therefore, we obtain \[\frac{SIR(\sigma)}{OPT(\sigma)} = \frac{\capV \cdot \abs{C} \cdot \abs{C}}{\abs{C}}= \capV \cdot \abs{C}\]
as a lower bound for the competitive ratio of SIR.
\begin{figure}[ht]
    \centering
    \includegraphics[width=0.99\textwidth]{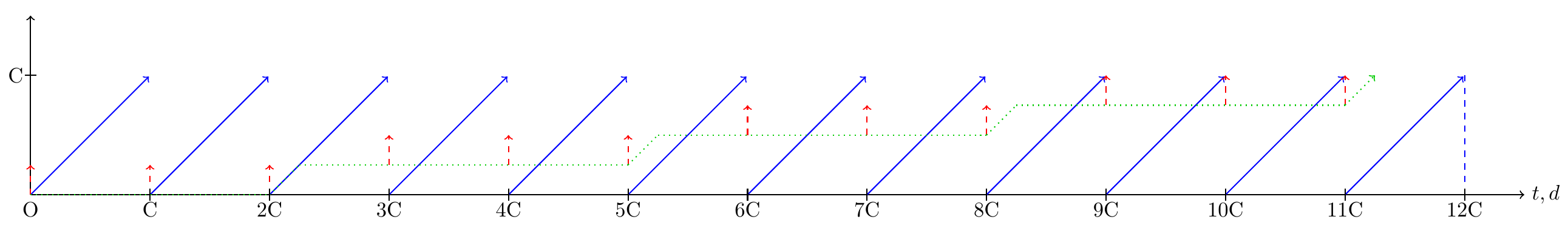}
 \caption{This figure illustrates the tour performed by SIR (in blue) and the adversary (dotted in green) in order to serve the requests (dashed arcs in red) from Example~\ref{exa: comp: SIR: CAPC-totaltourlength} for $\capV = 3, \nsub=3$ and $\abs{C}=4$.}
 \label{fig: SIR-worst-case-ten}
\end{figure}
\end{example}
In the special case of the lunch scenario, we may consider VIPAs operating in tram mode on circuits, where each circuit has the restaurant as its distinguished origin. 
A sequence $\sigma'$ containing the first $\capV$ and the last $\capV$ requests from the sequence presented in Example~\ref{exa: comp: SIR: CAPC-totaltourlength} shows that $2 \capV$ is a lower bound on the competitive ratio of SIR, 
see Figure~\ref{fig: SIR-ttl-lunch-worst-case} for an illustration.
\begin{figure}[ht]
    \centering
    \includegraphics[width=0.99\textwidth]{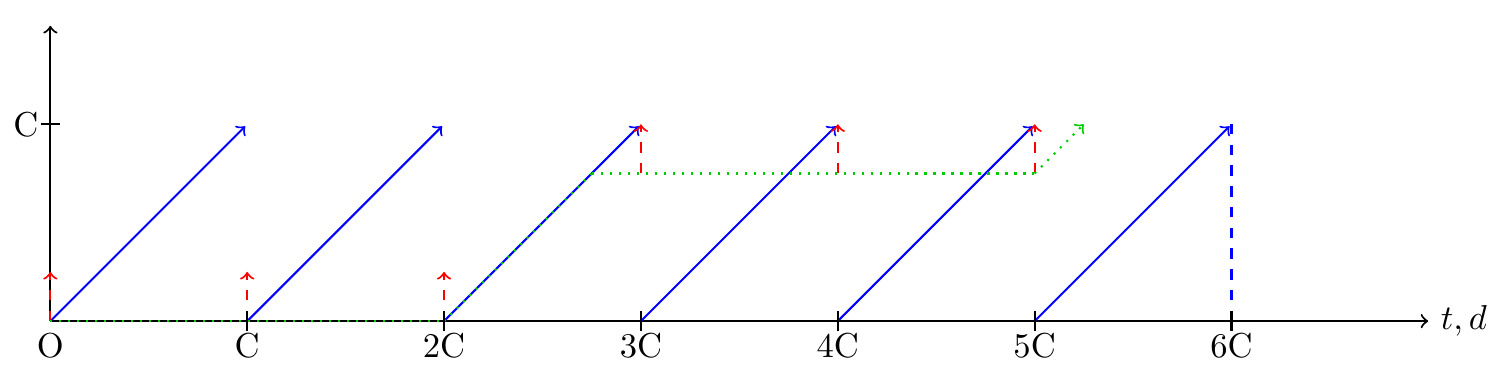}
 \caption{This figure illustrates the tour performed by SIR (in blue) and the adversary (dotted in green) in order to serve the first $\capV$ and the last $\capV$ requests (dashed arcs in red) 
 from the sequence presented in Example~\ref{exa: comp: SIR: CAPC-totaltourlength} for $\capV = 3, \nsub=3$ and $\abs{C}=4$.  These requests satisfy the criterias of the lunch scenario.}
 \label{fig: SIR-ttl-lunch-worst-case}
\end{figure}

As for the morning resp. evening scenario, we consider VIPAs operating in tram mode on a circuit $\circuit$ where the parking is the distinguished origin of $\circuit$. 
A sequence $\sigma''$ containing the first $\capV$ resp. last $\capV$ requests from the sequence presented in Example~\ref{exa: comp: SIR: CAPC-totaltourlength} shows that $\capV$ is a lower bound on the competitive ratio of SIR, 
see Figure~\ref{fig: SIR-ttl-morning-worst-case} and Figure~\ref{fig: SIR-ttl-evening-worst-case}, respectively.\\
\begin{figure}[ht]
    \centering
    \includegraphics[width=0.6\textwidth]{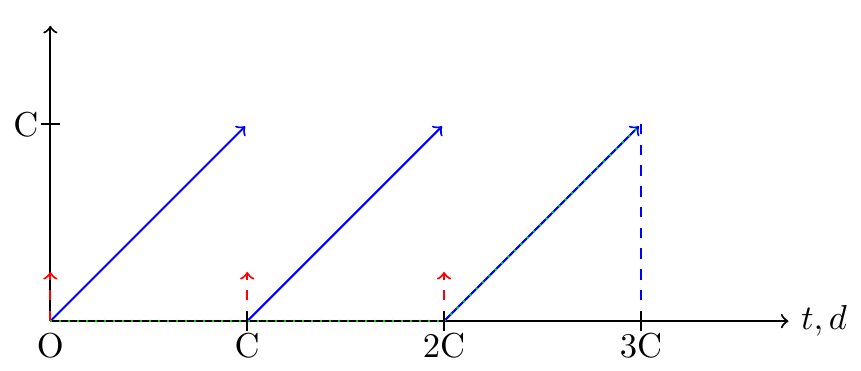}
 \caption{This figure illustrates the tour performed by SIR (in blue) and the adversary (dotted in green) in order to serve the first $\capV$ requests (dashed arcs in red)   
 from the sequence presented in Example~\ref{exa: comp: SIR: CAPC-totaltourlength} for $\capV = 3, \nsub=3$ and $\abs{C}=4$. 
 These requests satisfy the criterias of the morning scenario.}
 \label{fig: SIR-ttl-morning-worst-case}
\end{figure}
\begin{figure}[ht]
    \centering
    \includegraphics[width=0.6\textwidth]{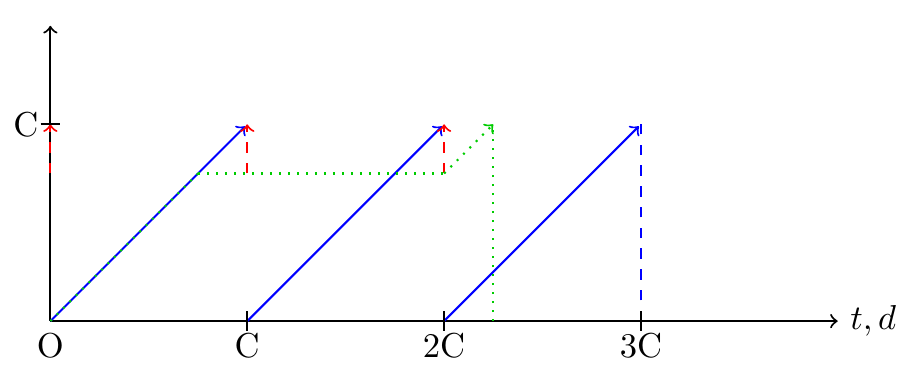}
 \caption{This figure illustrates the tour performed by SIR (in blue) and the adversary (dotted in green) in order to serve the last $\capV$ requests (dashed arcs in red)  
 from the sequence presented in Example~\ref{exa: comp: SIR: CAPC-totaltourlength} for $\capV = 3, \nsub=3$ and $\abs{C}=4$. These requests satisfy the criterias of the evening scenario.}
 \label{fig: SIR-ttl-evening-worst-case}
\end{figure}

\newpage
We can prove that the previously presented examples are indeed worst cases for SIR:
\begin{theorem}\label{th-SIR-ttl}
For one or several VIPAs with capacity $\capV$ operating in tram mode on a circuit $\circuit$ with length $\abs{C}$,
SIR is  w.r.t the objective of minimizing the total tour length 
\begin{itemize}
 \item $\capV \cdot \abs{C}$-competitive in general,
 \item $2 \cdot \capV$-competitive  during the lunch scenario,
 \item $\capV$-competitive during the morning scenario resp. the evening.
\end{itemize}
\end{theorem}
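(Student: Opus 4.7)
The plan is to prove matching upper bounds for the three claimed competitive ratios; the lower bounds are already exhibited by Example~\ref{exa: comp: SIR: CAPC-totaltourlength} and its lunch and morning/evening restrictions discussed just above. The unifying strategy is to bound the total number $N$ of subtours that SIR performs (summed over the whole fleet), since the SIR cost is exactly $N \cdot \abs{C}$, and then invoke the preceding theorem, which states $OPT(\sigma) = \lceil w(\IG)/\capV \rceil \cdot \abs{C}$.

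First I would establish the universal observation $N \leq \nreq$: each subtour of a VIPA is initiated by at least one pending request at the origin, no subtour is issued without a pending request, and distinct subtours serve disjoint ``triggering'' requests, so subtours can be charged injectively to requests. Combining $\nreq \leq \sum_j z_j$ with the edge-demand identity $\sum_j z_j \cdot d(x_j,y_j) = \sum_{e \in \circuit}(\text{passengers through } e) \leq w(\IG) \cdot \abs{C}$ together with $d(x_j,y_j) \geq 1$ then yields $N \leq w(\IG) \cdot \abs{C}$. Hence $SIR(\sigma) \leq w(\IG) \cdot \abs{C}^2 \leq \capV \cdot \abs{C} \cdot OPT(\sigma)$, which is the general bound.

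For the refined bounds I would exploit the special position of the origin $v_0$ of $\circuit$. In the morning (resp.\ evening) scenario every request has $v_0$ as origin (resp.\ destination), so the edge $(v_0,v_1)$ (resp.\ $(v_{\nsub},v_0)$) is traversed by every passenger and therefore $w(\IG) = \sum_j z_j \geq \nreq$. In the lunch scenario each request uses $v_0$ either as origin or as destination; writing $Z^{out}$ and $Z^{in}$ for the two passenger totals, the edges $(v_0,v_1)$ and $(v_{\nsub},v_0)$ carry exactly $Z^{out}$ and $Z^{in}$ passengers respectively, giving $w(\IG) \geq \max(Z^{out},Z^{in}) \geq (Z^{out}+Z^{in})/2$ and so $\nreq \leq 2\, w(\IG)$. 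Plugging the resulting bounds on $N$ into $SIR = N \cdot \abs{C}$ and comparing with $OPT \geq (w(\IG)/\capV) \cdot \abs{C}$ yields the ratios $\capV$ and $2\capV$ respectively.

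The step I expect to require the most care is the fleet version of $N \leq \nreq$: with several VIPAs one must argue that SIR neither dispatches empty subtours nor starts two different VIPAs for the same triggering request. This should be handled by the natural scheduling convention that each newly released request is assigned to a single idle VIPA and that a VIPA leaves the origin only when it has at least one pending request to serve. Once this bookkeeping is fixed, the arithmetic in each of the three cases above is routine and the three competitive ratios follow.
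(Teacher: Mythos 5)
Your proposal is correct, and its skeleton is the same as the paper's: both arguments reduce $SIR(\sigma)$ to (number of subtours)$\cdot\abs{\circuit}$, bound the number of subtours by the number of passengers $\sum_j z_j$, relate $\sum_j z_j$ to the clique number $w(\IG)$ via the load on the edges through the origin (all of $\circuit$ in general, the first/last edge in the special scenarios), and compare against $OPT(\sigma)=\lceil w(\IG)/\capV\rceil\cdot\abs{\circuit}$ from the preceding theorem. The difference is one of rigor rather than of route: the paper's proof essentially re-exhibits the worst-case sequences and asserts that they maximize the ratio, whereas you prove the upper-bound direction explicitly through the chain $N\le\sum_j z_j\le w(\IG)\cdot\abs{\circuit}$ (resp.\ $\le w(\IG)$, $\le 2w(\IG)$), which is the part the theorem actually needs; your version is the more complete one. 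Two small points. First, your intermediate claim $N\le\nreq$ can fail when some $z_j>\capV$, since one request then forces several subtours; this is harmless because the correct bound $N\le\sum_j\lceil z_j/\capV\rceil\le\sum_j z_j$ feeds into the rest of your chain unchanged (each subtour is triggered by, and serves, at least one not-yet-served passenger, and these passenger sets are disjoint). Second, your inequality $\sum_j z_j\le w(\IG)\cdot\abs{\circuit}$ implicitly equates $\abs{\circuit}$ with the number of edges of the circuit, i.e.\ it uses the unit-edge-length convention of the paper's examples; this matches the paper's implicit setting but is worth stating, since with arbitrarily short edges the passenger count is bounded by $w(\IG)$ times the number of edges, not times the length. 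With those two adjustments your argument is a sound and, if anything, tighter rendering of the paper's proof.
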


\begin{proof}
 Recall that a transportation schedule is based on a coloring of the interval graph $G_{\sigma}$, whose nodes stand for passengers from $\sigma$, i.e. to the requests
 $r_j \in \sigma$ counted with their multiplicities $z_j$.
 The worst coloring of $G_{\sigma}$ is to assign different colors to all nodes, i.e. using $\abs{G_{\sigma}}= \sum_{r_j\in \sigma} z_j$ many colors.
 The worst transportation schedule results if, in addition, each VIPA performs a separate subtour of length $\abs{\circuit}$ for each color (i.e. serving a single uniform request only per subtour), 
 yielding $\abs{G_{\sigma}}\cdot \abs{\circuit}$ as total tour length.\\ 
 
 SIR can indeed be forced to show this behavior by releasing the requests accordingly (i.e. by using uniform requests with $z_j=1$ each and with sufficiently large delay between 
  $t_j$ and $t_{j+1}$),
 \begin{itemize}
  \item in general: using the sequence $\sigma$ from Example~\ref{exa: comp: SIR: CAPC-totaltourlength},
  \item during lunch: using the sequence $\sigma^\prime$ restricted to the first $\capV$ and the last $\capV$ requests $(t_j,v_0,v_1,1)$ and $(t_j,v_{\nsub},v_0,1)$ from the sequence  $\sigma$ presented
  in Example~\ref{exa: comp: SIR: CAPC-totaltourlength} as \sah{in} Figure~\ref{fig: SIR-ttl-lunch-worst-case},
  \item during morning/evening: using the sequence $\sigma''$ restricted to the first $\capV$ requests $(t_j,v_0,v_1,1)$ (resp. the last $\capV$ requests $(t_j,v_{\nsub},v_0,1)$) from the sequence $\sigma$ presented in Example~\ref{exa: comp: SIR: CAPC-totaltourlength}, as 
  Figure~\ref{fig: SIR-ttl-morning-worst-case} (resp. Figure~\ref{fig: SIR-ttl-evening-worst-case}) shows.
 \end{itemize}
Furthermore, to maximize the ratio between this total tour length obtained by SIR and the optimal offline solution, we need to ensure that all requests in $\sigma$ can be served with as few subtours of length $\abs{\circuit}$ as possible.
This is clearly the case if all requests have length 1 and there are $\capV$ many requests traversing the same edge of $\circuit$ s.t. a single subtour suffices to serve all of them
(see again Example~\ref{exa: comp: SIR: CAPC-totaltourlength}).
This leads to 
\begin{itemize}
 \item $\abs{G_{\sigma}}=\abs{\sigma}=\capV \cdot \abs{\circuit}$  and $w(\IG)=\capV$ s.t.
 \[
 \frac{SIR(\sigma)}{OPT(\sigma)}= \frac{\capV \cdot \abs{\circuit} \cdot \abs{\circuit}}{1 \cdot \abs{\circuit}}= \capV \cdot \abs{\circuit}
\]
is the maximum possible ratio between $SIR(\sigma)$ and $OPT(\sigma)$ taken over all possible sequences in general.
\item  $\abs{G_{\sigma'}}=\abs{\sigma'}=2\capV$  and $w(\IG)=\capV$ s.t.
\[
 \frac{SIR(\sigma')}{OPT(\sigma')}= \frac{2\cdot \capV \cdot \abs{\circuit}}{1 \cdot \abs{\circuit}}= 2\cdot \capV
\]
is the maximum possible ratio between $SIR(\sigma)$ and $OPT(\sigma)$ taken over all possible sequences during the lunch.
\item$\abs{G_{\sigma''}}=\abs{\sigma''}=\capV$  and $w(\IG)=\capV$ s.t.
\[
 \frac{SIR(\sigma'')}{OPT(\sigma'')}= \frac{\capV \cdot \abs{\circuit}}{1 \cdot \abs{\circuit}}=\capV
\]
is the maximum possible ratio between $SIR(\sigma)$ and $OPT(\sigma)$ taken over all possible sequences during the morning or evening.
\hfill$\square$
\end{itemize}
\end{proof}

Moreover, SIR can be adapted to follow the strategy of the adversary. 
For that, we propose two other algorithms for VIPAs operating in tram mode in the morning resp. evening:
\newpage
\noindent
$SIF_M$ (``Start if fully loaded'') for the morning scenario\\[-6mm]  
\begin{itemize}
 \item each VIPA waits in the parking until $\capV$ passengers have entered,
 \item it starts a full round (as soon as it is fully loaded) and stops at stations where passengers request to leave.
\end{itemize}

\noindent
$SIF_E$ (``Start if fully loaded'') for the evening scenario\\[-6mm]  
\begin{itemize}
 \item each VIPA waits in the parking until enough requests are released to reach $\capV$,
 \item it starts a full round and stops at stations where passengers request to enter and returns (fully loaded) to the parking.
\end{itemize}
We can ensure optimality for these two strategies: 
\begin{theorem}
$SIF_M$ (resp. $SIF_E$) is $1$-competitive w.r.t minimizing the total tour length for one or several VIPAs operating in tram mode on a
circuit during the morning (resp. evening).
\end{theorem}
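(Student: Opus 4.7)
The plan is to derive both competitive ratios directly from the formula $OPT(\sigma)=\lceil w(G_\sigma)/\capV \rceil \cdot \abs{\circuit}$ established in the previous theorem, by observing that in the morning (resp. evening) scenario the common origin (resp. destination) forces the clique number of $G_\sigma$ to take its maximal possible value $\sum_{r_j\in\sigma}z_j$.

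First, consider $SIF_M$. In the morning scenario every request $r_j=(t_j,x_j,y_j,z_j)$ has $x_j=v_0$, the parking which is the origin of $\circuit$. Hence after splitting each $r_j$ into $z_j$ uniform passenger-intervals on the path $P=\{v_0,v_1,\dots,v_{\nsub},v_0\}$, every such interval contains the first edge $(v_0,v_1)$ of $P$. Consequently all nodes of $G_\sigma$ form a clique, giving $w(G_\sigma)=\sum_{r_j\in\sigma}z_j$, and by the previous theorem
\[
 OPT(\sigma)=\left\lceil \frac{\sum_{r_j\in\sigma}z_j}{\capV} \right\rceil \cdot \abs{\circuit}.
\]
On the other hand, $SIF_M$ only dispatches a VIPA once it is fully loaded (or, for the last subtour, when no further passenger can be loaded), so each subtour has length $\abs{\circuit}$ and carries exactly $\capV$ passengers, except possibly the last one. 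The total number of subtours is thus $\lceil (\sum z_j)/\capV \rceil$, matching $OPT(\sigma)$ and yielding $SIF_M(\sigma)=OPT(\sigma)$.

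The argument for $SIF_E$ is symmetric: in the evening all requests have destination $y_j=v_0$, so every uniform interval contains the last edge $(v_{\nsub},v_0)$ of $P$, all nodes of $G_\sigma$ form a clique again, and $w(G_\sigma)=\sum z_j$. The algorithm waits at the parking until enough requests accumulate to fill a VIPA, dispatches it to collect exactly $\capV$ passengers along a full round of $\circuit$, and brings them back to $v_0$. As before, each round has length $\abs{\circuit}$ and carries $\capV$ passengers (the last round possibly fewer), so the total number of rounds equals $\lceil (\sum z_j)/\capV \rceil$, and $SIF_E(\sigma)=OPT(\sigma)$.

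The only delicate point is the book-keeping for the last, possibly under-filled, subtour: one must check that rounding up in $\lceil \sum z_j/\capV\rceil$ coincides with the number of departures $SIF_M$ (resp. $SIF_E$) actually performs, which is immediate since a VIPA is dispatched as soon as it is full and one additional round is needed if and only if there remain unserved passengers. All other aspects (feasibility, respecting release dates, the fact that both algorithms act exactly like the optimal offline strategy used in the proof of the previous theorem) follow by direct inspection.
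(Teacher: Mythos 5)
Your proof is correct and follows essentially the same route as the paper's: both rest on the observation that the common origin (resp.\ destination) forces every passenger-interval to contain the first (resp.\ last) edge of $P$, so $G_\sigma$ is a clique, $OPT$ needs $\lceil (\sum_{r_j\in\sigma} z_j)/\capV\rceil$ full rounds of $\circuit$, and $SIF_M$ (resp.\ $SIF_E$) performs exactly that many. Your write-up merely makes explicit the counting that the paper compresses into the remark that ``no two passengers can share a same place in a VIPA.''
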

\begin{proof}
 Both variants of $SIF$ are optimal, because due to the special request structure during the morning resp. evening, all requests traverse the first (resp. last) edge of $P$ in the morning (resp. evening) s.t. $G_{\sigma}$ becomes a clique.
 In other words, no two passengers can share a same place in a VIPA s.t. starting fully loaded from the origin (resp. returning fully loaded to the origin) indeed provides the optimal solution w.r.t. minimizing the total tour length.
 \hfill$\square$
\end{proof}

The two previous algorithms $SIF_E$ and $SIF_M$ can be merged together to obtain a version for the lunch scenario:\\

\noindent
$SIF_L$ (``Start if fully loaded on at least one arc'')\\[-6mm]  
\begin{itemize}
 \item each VIPA waits in the restaurant until enough requests are released 
 s.t. by serving these requests using one VIPA, the VIPA is fully loaded at least on one arc in the circuit.
 \item it starts a full round and stops at stations where passengers request to enter or leave and returns to the restaurant.
\end{itemize}

\begin{theorem}
$SIF_L$ is $2$-competitive w.r.t minimizing the total tour length for one or several VIPAs operating in tram mode on a
circuit during the lunch.
\end{theorem}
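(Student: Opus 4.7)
The plan is to show that $SIF_L$ performs at most twice as many subtours as an optimal offline schedule; since every subtour (for both algorithms) has length $\abs{\circuit}$, this will deliver the claimed ratio.

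First, I would exploit the structure of the lunch requests: each request either originates at the restaurant $v_0$ (a \emph{down} request) or terminates at $v_0$ (an \emph{up} request). Let $N_D$ and $N_U$ denote the total multiplicities of down and up requests respectively. In the path representation $P$ obtained by unfolding $\circuit$ at $v_0$, every down interval contains the first arc and every up interval contains the last arc of $P$, so both the $N_D$ down-passengers and the $N_U$ up-passengers each form a clique in $G_\sigma$. Consequently $N_D \leq w(G_\sigma)$, $N_U \leq w(G_\sigma)$, and hence $N_D + N_U \leq 2\, w(G_\sigma)$.

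Next, I would count the subtours executed by $SIF_L$. By construction, every one of them --- except possibly a single final flushing subtour used to serve the last residual requests after the request stream ceases --- is launched precisely when the pending load reaches $\capV$ on some arc, and hence carries and delivers at least $\capV$ passengers across that fully loaded arc. Since different subtours serve disjoint sets of passengers, summing over the full subtours gives $n_{\text{full}} \cdot \capV \leq N_D + N_U \leq 2\, w(G_\sigma)$, so the total subtour count obeys $n \leq 2\, w(G_\sigma)/\capV + 1$.

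Combining with $OPT(\sigma) = \lceil w(G_\sigma)/\capV\rceil\cdot\abs{\circuit}$ from the preceding theorem, a short case distinction on whether $w(G_\sigma)$ is a multiple of $\capV$ closes the argument. If it is not, the ceiling in $\lceil w(G_\sigma)/\capV\rceil$ already absorbs the additive $+1$. If $w(G_\sigma) = k\capV$, either $n_{\text{full}}\cdot\capV = 2k\capV$, in which case $N_D + N_U = 2\, w(G_\sigma)$ and no flushing subtour is required so $n = n_{\text{full}} = 2k$, or else $n_{\text{full}} \leq 2k-1$ and still $n \leq 2k$. Multiplying by $\abs{\circuit}$ then yields $SIF_L(\sigma) \leq 2\cdot OPT(\sigma)$. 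The main obstacle is precisely this last rounding subtlety: the raw inequality $n \leq 2w(G_\sigma)/\capV + 1$ can exceed $2\lceil w(G_\sigma)/\capV\rceil$ exactly when $w(G_\sigma)$ is a multiple of $\capV$, so one must argue separately that in this boundary case the flushing subtour cannot fire. The extension to several VIPAs is then immediate, since by the remark after Algorithm \OPTTRAM the minimal total tour length is independent of the number of VIPAs used.
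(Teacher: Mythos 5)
Your proof is correct, but it is structured quite differently from the paper's. The paper argues by exhibiting what it asserts to be the extremal configuration: it describes the request pattern that makes $SIF_L$ serve each clique $Q_1$, $Q_2$ separately while OPT pairs $\capV$ passengers from each clique per round (with $\abs{Q_1}=\abs{Q_2}$ a multiple of $\capV$), and computes the ratio $2$ on that instance. You instead prove a genuine upper bound valid for \emph{every} lunch sequence: from the two-clique structure you get $N_D+N_U\leq 2\,w(G_{\sigma})$, from the launch condition of $SIF_L$ you get that every non-flushing subtour serves at least $\capV$ distinct passengers, and combining these with $OPT(\sigma)=\lceil w(G_{\sigma})/\capV\rceil\cdot\abs{\circuit}$ and a careful treatment of the case $\capV\mid w(G_{\sigma})$ yields $SIF_L(\sigma)\leq 2\cdot OPT(\sigma)$ unconditionally. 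Your route is the more rigorous one for the upper bound --- the paper never actually justifies that its described schedule is the worst over all sequences, and it silently assumes $\capV$ divides $\abs{G_{\sigma}}$; your pigeonhole-plus-rounding argument closes exactly those gaps, including the boundary observation that when $w(G_{\sigma})=k\capV$ and $n_{\text{full}}=2k$ the flush cannot fire. What the paper's extremal instance buys, and what your argument does not by itself provide, is tightness: to conclude that the competitive ratio is exactly $2$ one still needs a family of sequences attaining the bound, which is supplied by the paper's worst-case construction. The two arguments are therefore complementary. One small caveat in your write-up: the algorithm as stated in the paper never explicitly specifies an end-of-horizon flush, so your ``at most one final flushing subtour'' is an (entirely reasonable) assumption about how residual passengers with pending load below $\capV$ on every arc are eventually served; it is worth stating it as such.
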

\begin{proof}
Due to the special request structure during the lunch, all requests start and end in the restaurant and, thus, traverse the first and the last edge of $P$
s.t. $G_{\sigma}$ consists of two cliques $Q_1$ and $Q_2$ resulting from all uniform requests
traversing the first and the last edge, respectively.\\
The worst transportation schedule of $SIF_L$ results if the requests are released in a way that $SIF_L$ never serves a request from $Q_1$ with one from $Q_2$ together,
therefore by each subtour of length $\abs{\circuit}$ performed by $SIF_L$ $\capV$ requests are served either from the clique $Q_1$ or from $Q_2$,
yielding $SIF_L(\sigma)=\lceil\frac{\abs{G_\sigma}}{\capV}\rceil \cdot \abs{\circuit}$ as total tour length. \\
In order to maximize the ratio, OPT needs to serve as many requests as possible using the least total tour length possible.
OPT always combines $\capV$ requests from $Q_1$ with $\capV$ requests from $Q_2$ and serves them together by performing a subtour of length $\abs{\circuit}$. In addition, to avoid not fully loaded moves for OPT, the adversary choses 
$\abs{Q_1}=\abs{Q_2}$ and as a multiple of $\capV$ which leads to $OPT(\sigma)=\frac{\abs{G_\sigma}}{2\capV} \cdot \abs{\circuit}$, therefore
 \[
 \frac{SIF_L(\sigma)}{OPT(\sigma)}= \frac{\frac{\abs{G_\sigma}}{\capV} \cdot \abs{\circuit}}{\frac{\abs{G_\sigma}}{2\capV} \cdot \abs{\circuit}}= 2
\]
is the maximum possible ratio between $SIF_L(\sigma)$ and $OPT(\sigma)$ taken over all possible sequences on a circuit of length $\abs{\circuit}$ during the lunch.
\hfill$\square$
\end{proof}
\subsection{Elevator mode}
The elevator mode is a less restricted operation mode where one VIPA runs on a predefined line and can change its direction at any station of this line to move towards a requested station.
One end of this line is distinguished as origin $O$ (say, the ``left'' end).
\paragraph{Optimal offline solution:}
In order to obtain the optimal offline solution $OPT(\sigma)$ w.r.t. minimizing the total tour length, we compute a min cost flow in a suitable network. 
Given a line $\subline=(v_0,\dotsc,v_{\nsub})$ with origin $v_0$ as a subnetwork, capacity $\capV$ of a VIPA, and  a request sequence $\sigma$ with requests $r_j=(t_j,x_j,y_j,z_j)$.\\
In the sequel, we distinguish in which direction an edge $v_iv_{i+1}$ of $\subline$ is traversed and speak of the up arc $(v_i,v_{i+1})$ and the down arc $(v_{i+1},v_i)$.
In order to construct the network, we proceed as follows:
\begin{itemize}
 \item[$\bullet$]Neglect the release dates $t_j$ and only consider the loads of the requests $z_j$, their origins $x_j$ and their destinations $y_j$.
 \item[$\bullet$]Partition the requests into two subsets:
  \subitem $\uprequests$ of ``up-requests''  $r_j \in \sigma$ with $x_j<y_j$,
  \subitem $\downrequests$ of ``down-requests''  $r_j \in \sigma$ with $x_j>y_j$.
\item[$\bullet$] Determine the loads of all up arcs $(v_i,v_{i+1}))$ or down arcs $(v_{i+1},v_i)$  of the line $\subline$ as a weighted sum of the load of all request-paths $(x_j,y_j)$ containing this arc: 
\subitem $load(v_i,v_{i+1})=\sum_{(v_i,v_{i+1})\in(x_j,y_j),x_j<y_j} z_j$ $\forall i \in \{0,\nsub-1\}$, $\forall r_j \in \uprequests$ 
\subitem $load(v_{i+1},v_i)=\sum_{(v_{i+1},v_i)\in(x_j,y_j),x_j>y_j} z_j$ $\forall i \in \{0,\nsub-1\}$, $\forall r_j \in \downrequests$ 
\item[$\bullet$] Determine the ``multiplicities'' $m$ of all up/down arcs: 
in order to serve all the requests in $\sigma$, each arc $(v_i,v_{i+1})$ must be visited $m_{(i,i+1)}=\lceil \frac{load(v_i,v_{i+1})}{\capV} \rceil$ times and each arc $(v_{i+1},v_i)$ 
must be visited $m_{(i+1,i)}=\lceil \frac{load(v_{i+1},v_i))}{\capV} \rceil$ times. 
In case the multiplicity $m_{(i,i+1)}$ resp. $m_{(i+1,i)}$ is equal to zero, then the corresponding up resp. down arc is removed.
\end{itemize}
Now we build a network ~$G_E = (V_E, A_E)$, where 
\begin{itemize}
 \item[$\bullet$] the node set $V_E=V^{up(o)}\cup V^{up(d)} \cup V^{down(o)} \cup V^{down(d)} $ contains 
 \subitem the origin nodes of all up arcs where the multiplicity is different from zero in $V^{up(o)}$,
 \subitem the destination nodes of all up arcs where the multiplicity is different from zero in $V^{up(d)}$,
 \subitem the origin nodes of all down arcs where the multiplicity is different from zero in $V^{down(o)}$,
 \subitem the destination nodes of all down arcs where the multiplicity is different from zero in $V^{down(d)}$,
 \subitem the origin $v_0$ of the line $\subline$ as source $s$ and as sink $t$.
\item [$\bullet$]The arc set $A_E =A_s \cup A_U \cup A_D \cup A_L\cup A_t$ is composed of:
\subitem source arcs from the source $s$ to all $v_i^{up(o)}\in V^{up(o)}$ and all $v_i^{down(o)}\in V^{down(o)}$ in $A_{s}$,
 \subitem  up arcs $(v_i^{up(o)}, v_{i+1}^{up(d)})$ whenever $m_{(i,i+1)}\neq 0$ in $A_U$,
 \subitem  down arcs $(v_{i+1}^{down(o)},v_i^{down(d)})$ whenever $m_{(i+1,i)}\neq 0$ in $A_D$,
 \subitem link arcs in $A_L$ going from all $v_i^{up(d)}\in V^{up(d)}$ to all $v_i^{down(o)}\in V^{down(o)}$, and from all $v_i^{down(d)}\in V^{down(d)}$ to all $v_i^{up(o)}\in V^{up(o)}$,
 \subitem sink arcs from all $v_i^{up(d)}\in V^{up(d)}$ and from all $v_i^{down(d)}\in V^{down(d)}$ to the sink $t$ in $A_{t}$.
\end{itemize}
Accordingly, the objective function considers costs $d(a)=d(u,v)$ for the flow $f$ on all arcs $a=(u,v)$ in $A_E$, where $d(u,v)$ is the length of a shortest path from $u$ to $v$ in the line $\subline$.
To correctly initialize the system, we use the source node $s$ as source for the flow and the sink node $t$ as its destination.
For all internal nodes, we use normal flow conservation constraints.
We require a flow on all up/down arcs $f(a)=m(a)$ for all $a \in \{A_U\cup A_D\}$, see constraint~(\ref{eq: elevator: min-cost: flow subtour arcs: ilp}). 
We finally add the constraint~(\ref{eq: elevator: min-cost: subtour-elimination: ilp}) to eliminate all possible isolated cycles that the flow may contain (since the network contains directed cycles).

This leads to a Min-Cost Flow Problem, whose output is a subset of arcs needed to form a transportation schedule for a metric task system, whose tasks are induced by the requests.
The corresponding integer linear program is as follows:
\begin{subequations}
\label{eq: elevator: min-cost: model: ilp}
  \begin{align}
  \min \, & \sum_{a\in A_E} d(a)f(a)\\
  s.t. \, & \sum_{a\in \delta^-(s)}f(a) = 1\\
          & \sum_{a\in \delta^-(v)} f(a) = \sum_{a\in \delta^+(v)} f(a)  && \hspace{-0.5cm} \forall v \neq \{s,t\} \label{eq: elevator: min-cost: flow conservation: ilp: 10}\\
          & \sum_{a\in \delta(W)}f(a)  \geq 2 && \hspace{-0.5cm} \forall W \subset V_E \setminus \{s,t\} , 2 \leq |W| \leq |V_E|-3  \label{eq: elevator: min-cost: subtour-elimination: ilp}\\
          & f(a)=m(a) && \hspace{-0.5cm} \forall a \in \{A_U\cup A_D\} \label{eq: elevator: min-cost: flow subtour arcs: ilp}\\
          & f(a) \in \Z_+ \label{eq: elevator: min-cost: integrality: ilp}
  \end{align}
\end{subequations}
where $\delta^-(v,t)$ denotes the set of outgoing arcs of $v$, $\delta^+(v)$ denotes the set of incoming arcs of $v$ and $\delta(W)$ denotes the set of incoming or
outgoing arcs $(u,v)$ of $W$ s.t. $u\in W$ and $v\notin W$ or $u\notin W$ and $v\in W$.
The time required to compute the integer linear program grows in proportion to $2^{\abs{V}}$ due to the constraint~(\ref{eq: elevator: min-cost: subtour-elimination: ilp})
that eliminates all possible isolated cycles, and, hence, it may grow exponentially. 
However, this integer linear program can be computed in reasonable time provided that the number $V$ of nodes in the original network (the line) is small.
\begin{remark}
The family of constraints~(\ref{eq: elevator: min-cost: subtour-elimination: ilp}) can be generated and then each inequality is separated to verify if it is violated or not.
However, due to their exponential number, the process of separation in order to verify if a solution satisfies all constraints is exponential.
Since the number of subtour elimination constraints is exponential, we may firstly compute the integer linear program without the constraints~(\ref{eq: elevator: min-cost: subtour-elimination: ilp}). Then we check
if there is an isolated cycle in the solution obtained, if yes, we add only the constraint~(\ref{eq: elevator: min-cost: subtour-elimination: ilp}) using the nodes of this isolated cycle. 
This procedure is repeated until a solution without isolated cycles is found.
\end{remark}
Finally, the flow in the time-expanded network is interpreted as a transportation schedule.
The tracks of the VIPA over time can be recovered from the flow $f(a)$ on the arcs by standard flow decomposition, see~\cite{ahuja1993network}. 
Hereby, a flow $f(a)$ on an arc $a = (u, v)$ corresponds to a move of a VIPA on this arc. Based on the flow values, one can construct a unique path from source $s$ to sink $t$ traversing 
all arcs $a$ with positive flow exactly $f(a)$ times.
This shows that the optimal solution of system \eqref{eq: elevator: min-cost: model: ilp} corresponds to a transportation schedule with minimal total tour length
for the offline problem behind the elevator mode.

\begin{example}
 \label{exa: comp: opt: elevator: offline solution}
Consider a line $\subline=(v_0,\dotsc,v_{\nsub})$ with origin $v_0$, a unit distance between $v_i$ and $v_{i+1}$ for each $i$, with a set $\sigma$ of 9 requests shown in Figure~\ref{fig: opt-elevator-requests},
and a VIPA with capacity $\capV=2$.
The resulting network $G_E = (V_E, A_E)$ of the line presented in this example is illustrated in Figure~\ref{fig: opt: elevator: min flow}.
The optimal offline solution, the transportation schedule of the VIPA with a minimal total tour length, is obtained by computing the presented integer linear program for a min-cost flow problem in the constructed network $G_E = (V_E, A_E)$.

\begin{figure}[ht]
\centering
 \begin{tabular}{>{\centering\arraybackslash}p{3.5cm}|>{\centering\arraybackslash} p{2cm}|>{\centering\arraybackslash}p{2cm}|>{\centering\arraybackslash}p{2cm}|>{\centering\arraybackslash}p{2cm}}
 \multicolumn{5}{c }{\includegraphics{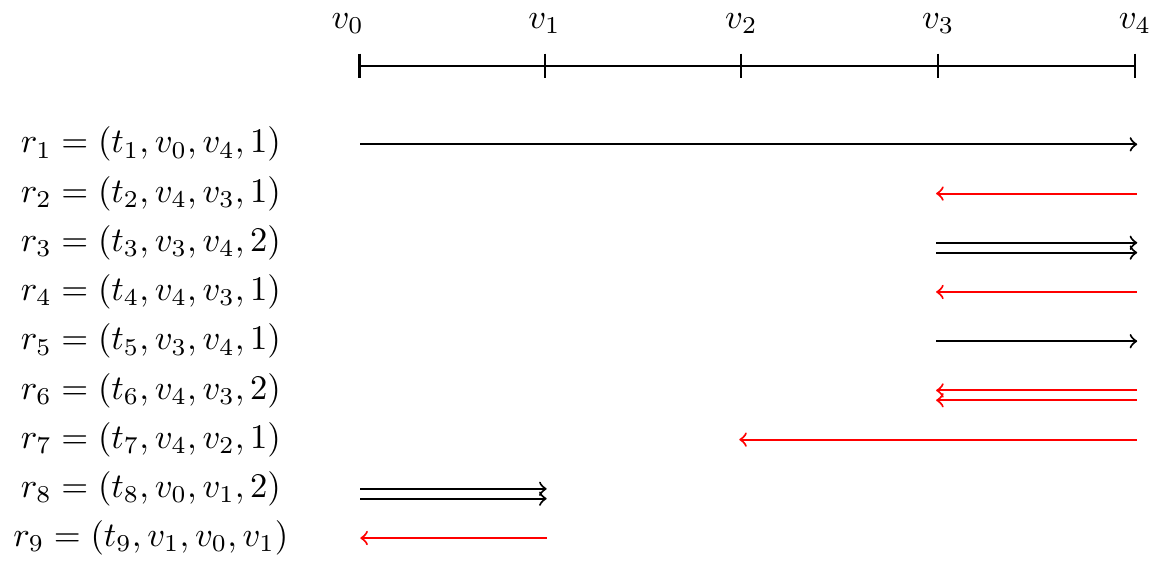}} \\
 \hline
 $load(up arcs)$&3&1&1&4\\
 \hline
 $load(down arcs)$&1&0&1&5\\
 \hline
 \hline
  $m(up arcs)=\lceil \frac{load}{\capV} \rceil$&2&1&1&2\\
 \hline
 $m(down arcs)=\lceil \frac{load}{\capV} \rceil$&1&0&1&3\\
 \hline
\end{tabular}
 \caption{This figure illustrates the line $\subline=(v_0,\dotsc,v_{\nsub})$ with origin $v_0$, and a set $\sigma$ of 9 requests and a VIPA with capacity $\capV=2$.
 The requests are partitioned into two subsets ``up-requests'' (arcs in black) and ``down-requests'' (arcs in red). Each arc represents a load of 1, for example $r_3$ is represented by 2 arcs going from $v_3$ to $v_4$.
 The loads of all up arcs (all arcs $(v_i,v_{i+1}))$ or down arcs (all arcs $(v_{i+1},v_i)$) of the line $\subline$ are shown in the first and second row of the table.
 Then the third and the forth rows contain  the ``multiplicities'' $m$ of all up/down arcs.}
 \label{fig: opt-elevator-requests}
 \end{figure}
\begin{figure}[ht]
    \centering
    \includegraphics[width=0.99\textwidth]{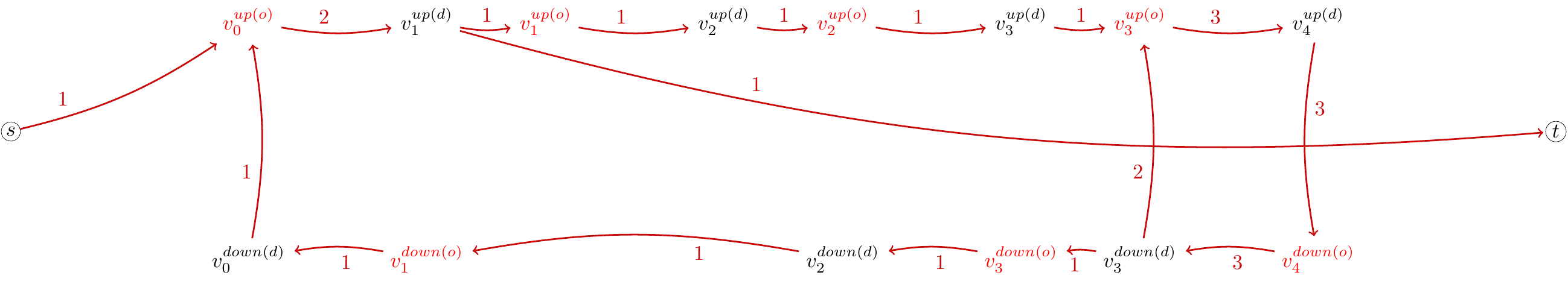}
 \caption{This figure illustrates the flow computed by the presented integer linear program for a min-cost flow problem in the network $G_E = (V_E, A_E)$ of Example~\ref{exa: comp: opt: elevator: offline solution}.
 The values above the arcs correspond to the value of the 
 flow $f(a)$ or the number of times the VIPA traverses the arc $a$ in the transportation schedule.}

 \label{fig: opt: elevator: min flow}
\end{figure}
\end{example}
\paragraph{Online algorithm:}
An algorithm MRIN (``Move Right If Necessary'') has been proposed for the Online-Traveling Salesman Problem (where no transports have to be performed, but only points to be visited)
on a line 
and analyzed w.r.t. minimizing 
the makespan \cite {IJC:BMK-2001}, giving a competitive ratio of $3/2$.
We generalize MRIN to the Pickup and Delivery Problem and analyze it w.r.t. minimizing the total tour length. 
In elevator mode, the server (VIPA) has the choice to continue its tour in the current direction, to wait at its current position or to change its driving direction.
Accordingly, we propose the algorithm MAIN. The adversary can again ``cheat'' the strategy of MAIN as the following example shows.
\begin{algorithm}
MAIN (``Move Away If Necessary'')\\
\SetKwInOut{Input}{Input}\SetKwInOut{Output}{Output}
\Input{ request sequence $\sigma$, line $\subline$ with origin $v_O$, $\capV$ }
\Output{tour on $\subline$ to serve all requests in $\sigma$ }
 \emph{initial server position $s:=v_O$}\;
 \emph{initial set of currently waiting requests (already released but not yet served requests) $\sigma^\prime:=\{r_j\in \sigma: t_j=0\}$}\;
  \While{ $\sigma^\prime \neq \varnothing$}{
  determine the subset $\sigma^\prime_{up}$ of requests $r_j=(t_j,x_j,y_j,z_j)\in \sigma^\prime $ with $s\leq x_j \leq y_j$\;
  \If{ $\sigma^\prime_{up} \neq \varnothing$}
  { Serve all requests (or up to $\capV$ passengers) in $\sigma^\prime_{up}$ (moving away from $v_O$ to furthest destination $y_k$ among all $r_j\in \sigma^\prime_{up}$)\;}
 \Else{
  determine subset $\sigma^\prime_{down}$ of requests $r_j=(t_j,x_j,y_j,z_j)\in \sigma^\prime $ with $x_j > y_j$\;
  serve all requests (or up to $\capV$ passengers) in $\sigma^\prime_{down}$ while moving to the origin\;
  }
  update $s$ and $\sigma^\prime$ (remove all served requests, add all newly released requests)\;
}
\end{algorithm}\DecMargin{1em}

\begin{example}
\label{exa: comp: MAIN: 2CAPL-totaltourlength}
Consider a line $\subline=(v_0,\dotsc,v_{\nsub})$ with origin $v_0$, a unit distance between $v_i$ and $v_{i+1}$ for each $i$, and one unit-speed server with capacity $\capV$.
The adversary releases  a sequence $\sigma$ of 
uniform requests that force MAIN to leave the origin of the line and perform a subtour of a certain distance for each request,
whereas the adversary is able to serve all requests in a single subtour of length $2\abs{L}$:\\
\begin{subequations}
\label{eq: comp: MAIN: CAPC-totaltourlength}
  \begin{align*}
 &\hspace{-0.75cm}\text{The first block } \sigma_1 \text{ of } \nsub \cdot \capV \text{ requests:}\\
 r_1 & = (0,v_0,v_1,1) \\
 r_j & =(t_{j-1}+2d(v_0,v_1),v_0,v_1,1) \text{ for } 2 \leq j \leq \capV\\
r_j & =(t_{j-1}+2d(v_0,v_1),v_1,v_2,1) \text{ for } j=\capV+1\\
r_j  & =(t_{j-1}+2d(v_0,v_2),v_1,v_2,1) \text{ for } \capV+2 \leq j \leq 2\capV\\
r_j & =(t_{j-1}+2d(v_0,v_{\nsub-1}),v_{\nsub-1},v_{\nsub},1) \text{ for } j=(\nsub-1)\capV+1\\
 r_j & =(t_{j-1}+2d(v_0,v_{\nsub}),v_{\nsub-1},v_{\nsub},1) \text{ for } (\nsub-1)\capV+2 \leq j \leq \nsub\capV \\
 &\hspace{-0.75cm}\text{The second block } \sigma_2 \text{ of } 2\kp \cdot \capV \text{ requests}\\
r_j & =(t_{j-1}+2d(v_0,v_{\nsub}),v_{\nsub},v_{\nsub-1},1) \text{ for } \nsub\capV+1 \leq j \leq (\nsub+1)\capV \\
r_j & =(t_{j-1}+2d(v_0,v_{\nsub}),v_{\nsub-1},v_{\nsub},1) \text{ for } (\nsub+1)\capV+1 \leq j \leq (\nsub+2)\capV \\
r_j & =(t_{j-1}+2d(v_0,v_{\nsub}),v_{\nsub},v_{\nsub-1},1) \text{ for } (\nsub+2\kp-2)\capV+1 \leq j \leq (\nsub+2\kp-1)\capV \\
 r_j & =(t_{j-1}+2d(v_0,v_{\nsub}),v_{\nsub-1},v_{\nsub},1) \text{ for } (\nsub+2\kp-1)\capV+1 \leq j \leq (\nsub+2\kp)\capV \\
   \end{align*}
\end{subequations}
\begin{subequations}
  \begin{align*}
 &\hspace{-0.75cm}\text{The third block } \sigma_3 \text{ of }\nsub \cdot \capV \text{ requests:}\\
r_j & =(t_{j-1}+2d(v_0,v_{\nsub}),v_{\nsub},v_{\nsub-1},1) \text{ for } (\nsub+2\kp+1)\capV+1 \leq j \leq (\nsub+2\kp+2)\capV \\
r_j & =(t_{j-1}+2d(v_0,v_{\nsub}),v_{\nsub-1},v_{\nsub-2},1) \text{ for } j=(\nsub+2\kp+2)\capV+1\\
r_j & =(t_{j-1}+2d(v_0,v_{\nsub}),v_{\nsub},v_{\nsub-1},1) \text{ for } (\nsub+2\kp+2)\capV+2 \leq j \leq (\nsub+2\kp+3)\capV \\
r_j & =(t_{j-1}+2d(v_0,v_2),v_1,v_0,1)\text{ for } j=(2\nsub+2\kp-1)\capV+1\\
r_j & =(t_{j-1}+2d(v_0,v_1),v_1,v_0,1) \text{ for }  (2\nsub+2\kp-1)\capV+2  \leq j \leq (2\nsub+2\kp)\capV\\
  \end{align*}
\end{subequations}
MAIN starts its VIPA at time $t=0$ to serve $r_1=(0,v_0,v_1,1)$ and finishes the first subtour of length $2d(v_0,v_1)=2$ without serving any further request.
When the VIPA operated by MAIN is back to the origin $v_0$, the second request $r_2=(2,v_0,v_1,1)$ is released and MAIN starts at $t=2$ a second subtour of length $2$ to serve $r_2$,
without serving any further request in this subtour. This is repeated for each request until serving the first block of $\nsub \cdot \capV$ requests 
yielding  \[MAIN(\sigma_1)=2\cdot \capV \sum_{1 \leq i \leq \nsub} i =\capV \cdot \abs{\subline} \cdot (\abs{\subline}+1) \]
Then at $t=t_{\nsub\capV+1}$ MAIN starts to serve $r_{\nsub\capV+1}$ from $v_n$ to $v_{n-1}$ and performs a subtour of length $2d(v_0,v_{\nsub})=2\abs{L}$. When the VIPA operated by MAIN is back
to the origin $v_0$, the request $r_{\nsub\capV+2}$ is released and MAIN performs a new subtour of length $2\abs{L}$ to serve it. This is repeated for each request
until serving the second block of $\kp \cdot2 \capV$ requests yielding  \[ MAIN(\sigma_2)= 2\kp \cdot 2\capV \abs{L}. \]
Finally in order to serve the third block MAIN has the same behavior as to serve the first block of requests yielding 
\[MAIN(\sigma_3)=2\cdot \capV \sum_{1 \leq i \leq \nsub} i=\capV \cdot \abs{\subline} \cdot (\abs{\subline}+1). \]
Therefore \[MAIN(\sigma)=\capV \cdot \abs{\subline} \cdot (\abs{\subline}+1 )+ 2\kp \cdot 2\capV \abs{L}= (\abs{\subline} + 1 + 2\kp) \cdot 2\abs{L}\cdot\capV.  \]
The adversary waits at the origin $v_0$ until $t=t_{\capV}$ and serves all requests $r_1,\dotsc,r_{\capV}$ from $v_0$ to $v_1$. Then he waits until $t=t_{2\capV}$ at $v_1$ and serves all requests 
$r_{\capV+1},\dotsc, r_{2\capV}$ from $v_1$ to $v_2$. This is repeated for all $\capV$ requests from $v_i$ to $v_{i+1}$ until the adversary arrives to $v_{\nsub}$. OPT served the first block of $\nsub \cdot \capV$
requests with a total tour length equal to $\abs{L}$.
Then the adversary begins to oscillate his VIPA between $v_{\nsub}$ and $v_{\nsub-1}$ and serves each time $\capV$ requests, this is repeated $2\kp$ times leading to a total tour length for $\sigma_2$ equal to $2\kp$.
Finally the adversary follows the other direction and waits each time until $\capV$ requests are released to serve them,
for all $\capV$ requests from $v_i$ to $v_{i-1}$ until reaching $v_0$, yielding 
$OPT(\sigma)=2\abs{L} +2\kp.$
Therefore, we obtain 
\[
 \frac{MAIN(\sigma)}{OPT(\sigma)}= \frac {2\cdot \capV \cdot\abs{\subline}\cdot (\abs{\subline}+1)+(\kp \cdot 2\cdot \capV \cdot2\abs{\subline})}{2(\abs{\subline} +\kp)}
 =\frac {2\cdot \capV \cdot\abs{\subline}\cdot (\abs{\subline}+1+2\kp)}{2(\abs{\subline} +\kp)}\]
 \[= \capV \cdot\abs{\subline}+ \frac{(1+\kp)}{\abs{\subline}+\kp}\capV \cdot\abs{\subline}
 \underset{\kp\to +\infty}{\longrightarrow} 2\capV \cdot \abs{\subline}
\]
as a lower bound for the competitive ratio of MAIN.
\begin{figure}[ht]
    \centering
    \includegraphics[width=0.99\textwidth]{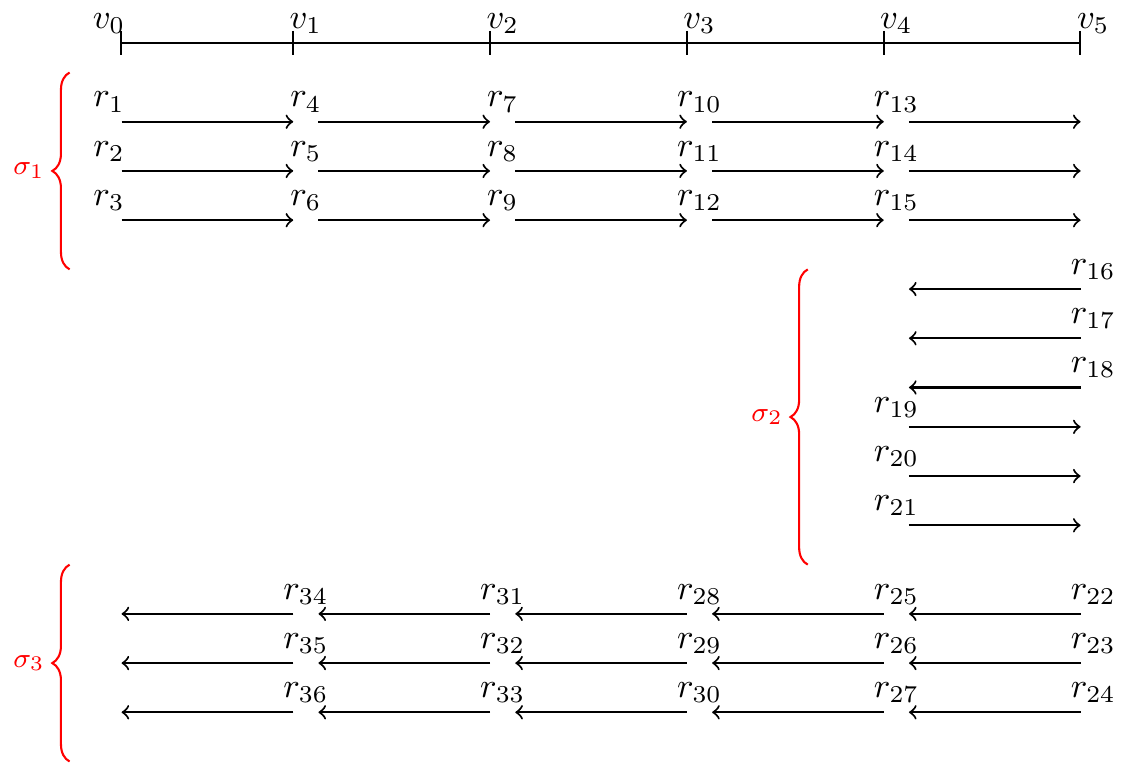}
 \caption{This figure illustrates the set $\sigma=\sigma_1\cup\sigma_2\cup \sigma_3$ of requests (arcs under the line $\subline=(v_0,\dotsc,v_5)$ with origin $v_0$) 
 from Example~\ref{exa: comp: MAIN: 2CAPL-totaltourlength} for $\capV = 3, \nsub=5$ and $\kp=1$.}
 \label{fig: main-worst-case-requests}
\end{figure}
\end{example}

 We can determine an upper bound for the competitive ratio of MAIN close to the ratio obtained by the previous example:
\begin{theorem}\label{th-MAIN-cap-general}
MAIN is $2\capV \cdot \abs{\subline}$-competitive w.r.t minimizing the total tour length for one VIPA operating in elevator mode on a line $\subline$ with length $\abs{\subline}$. 
\end{theorem}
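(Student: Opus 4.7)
The plan is to establish the claimed competitive ratio by separately producing an upper bound on $MAIN(\sigma)$ and a lower bound on $OPT(\sigma)$, each expressed in terms of the number $\nreq$ of requests. Dividing one by the other will yield exactly $2\capV\cdot\abs{\subline}$, matching the lower bound witnessed by Example~\ref{exa: comp: MAIN: 2CAPL-totaltourlength}.

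For the upper bound on MAIN, I would partition its trajectory into \emph{subtours}: maximal time intervals during which the VIPA leaves $v_0$, performs some sequence of up- and down-moves on $\subline$, and returns to $v_0$. The structure of the while-loop, in particular the else branch that drives the VIPA toward the origin whenever $\sigma'_{up}$ is empty from the current position, ensures that every subtour indeed closes at $v_0$. Since the VIPA only leaves the origin in response to a pending request, each subtour serves at least one request not served by any earlier subtour; hence the number of subtours is at most $\nreq$. Each subtour stays within $\subline$, so it has length at most $2\abs{\subline}$, giving
\[
MAIN(\sigma) \leq 2\abs{\subline}\cdot\nreq.
\]

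For the lower bound on $OPT$, I would invoke the capacity constraint: on any arc traversed by the offline VIPA, at most $\capV$ passengers can be transported simultaneously, so summing passenger-distance contributions over all arc traversals yields
\[
\capV\cdot OPT(\sigma)\ \geq\ \sum_{r_j\in\sigma} z_j\cdot d(x_j,y_j)\ \geq\ \nreq,
\]
where the final inequality uses $z_j\geq 1$ and $d(x_j,y_j)\geq 1$ (consecutive stations are at unit distance). Therefore $OPT(\sigma)\geq \nreq/\capV$, and combining the two estimates gives $MAIN(\sigma)/OPT(\sigma)\leq 2\abs{\subline}\nreq/(\nreq/\capV)=2\capV\cdot\abs{\subline}$, as required.

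The main obstacle is the first step: one has to make the subtour decomposition fully precise, which means tracking how $\sigma'$ evolves when new requests are released in the middle of a subtour. In particular, one must confirm that even if new up-requests keep appearing, the VIPA is eventually forced by the else branch to travel back to $v_0$, and that no subtour can proceed without serving a request. Once this bookkeeping is in place, both the subtour count and the per-subtour length bound become immediate, and the rest of the argument is elementary.
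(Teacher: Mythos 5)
Your argument is correct in substance but takes a genuinely different route from the paper. The paper's proof works by characterizing the structure a worst case must have (MAIN forced into one excursion per uniform request, OPT oscillating fully loaded between adjacent stations) and then exhibiting the sequence of Example~\ref{exa: comp: MAIN: 2CAPL-totaltourlength} that realizes it, asserting that the resulting ratio $\to 2\capV\cdot\abs{\subline}$ is maximal over all sequences; it is essentially the lower-bound construction promoted to an upper-bound claim. You instead give a direct two-sided bound: at most $2\abs{\subline}$ of travel per excursion and at least one newly served passenger per excursion on one side, and the capacity/flow bound $\capV\cdot OPT(\sigma)\geq\sum_j z_j\, d(x_j,y_j)$ on the other. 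This is cleaner and genuinely proves the upper bound rather than arguing from a candidate worst case, at the cost of not exhibiting tightness (which Example~\ref{exa: comp: MAIN: 2CAPL-totaltourlength} supplies anyway). Two small repairs are needed. First, the counting unit should be passengers, not requests: a request with $z_j>\capV$ is served over several iterations of the while-loop, so the number of excursions is bounded by $\sum_j z_j$ rather than by $\nreq$; this is harmless because the same quantity $\sum_j z_j$ appears in your lower bound on $OPT$ (via $z_j\geq 1$ and $d(x_j,y_j)\geq 1$), so the ratio still telescopes to $2\capV\cdot\abs{\subline}$. Second, your final inequality $\sum_j z_j\, d(x_j,y_j)\geq \sum_j z_j$ needs $d(x_j,y_j)\geq 1$, i.e.\ unit (or at least unit-lower-bounded) distances between consecutive stations; the paper makes the same assumption implicitly in all its examples, but it is worth stating, since with an arc of length $\varepsilon\ll 1$ far from the origin the ratio can in fact exceed $2\capV\cdot\abs{\subline}$. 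The bookkeeping you flag as the main obstacle does go through: whenever $\sigma^\prime_{up}$ is empty the else-branch drives the server back to $v_0$, up-legs move monotonically away from $v_0$, so every excursion is a single out-and-back of length at most $2\abs{\subline}$ that serves at least one passenger.
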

\begin{proof}
The worst transportation schedule results if all requests are uniform and the VIPA operated by MAIN performs a separate subtour serving a single request $r_j=(t_j,x_j,y_j,1)$ each time the VIPA leaves the origin $v_0$ of the line,
yielding $\sum_{ r_j \in \sigma}2\max(d(v_0,x_j),d(v_0,y_j))$ 
as total tour length. \\
To maximize the ratio between the total tour length obtained by MAIN and the optimal offline solution, we need to ensure that
 \begin{itemize} 
 \item we do not have a move with a load less than the capacity $\capV$ of the VIPA in the transportation schedule of $OPT$;
 \item all requests in $\sigma$ can be served with as few and as short subtours as possible in OPT. 
\end{itemize}
The worst ratio of subtours can be obtained when
\begin{itemize}
 \item OPT oscillates fully loaded between two neighbored nodes of $\subline$,
 \item MAIN is forced to traverse the whole line twice per passenger, i.e. oscillates between $v_0$ and $v_\nsub$.
\end{itemize}
For that, $v_\nsub$ needs to be either origin or destination of each request, and the delay between the release dates needs to be sufficiently large. 
 This can be achieved with subsequence $\sigma_2$ from Example~\ref{exa: comp: MAIN: 2CAPL-totaltourlength}, with $\kp$ blocks each consisting of
 \begin{itemize}
  \item $\capV$ consecutive uniform requests from $v_n$ to $v_{n-1}$, alternated by
  \item $\capV$ consecutive uniform requests from $v_{n-1}$ to $v_n$,
 \end{itemize}
 always with a delay $2\abs{\subline}$ between the release dates of any two requests $r_j$ and $r_{j+1}$.
 We obtain $OPT(\sigma_2)=2\kp$ and $MAIN(\sigma_2)=\kp\cdot2\cdot\capV\cdot2\cdot\abs{\subline}$ which leads to the studied subtour ratio of
 \[
  \frac{MAIN(\sigma_2)}{OPT(\sigma_2)}=\frac{\kp\cdot2\cdot\capV\cdot2\cdot\abs{\subline}}{2\kp}=2\cdot\capV\cdot\abs{\subline}.
 \]
 However, this ratio so far neglects the initial and final server position $v_0$ for the VIPA operated by OPT. The requirement of starting and ending the tour in $v_0$ leads to a total tour length for OPT of
 \[
  OPT(\sigma)=\abs{\subline}\cdot2\kp\cdot \abs{\subline}.
 \]
 In order to maximize the ratio of the complete tours, the adversary releases more requests to ensure that the VIPA operated by
 \begin{itemize}
  \item OPT can arrive at $v_n$ (resp. return from $v_n$ to $v_0$) fully loaded on each arc,
  \item MAIN is forced to oscillate between $v_0$ and the destination $y_j$ (resp. the origin $x_j$) of each uniform request $r_j$.
 \end{itemize}
 This can be achieved with the subsequences $\sigma_1$ and $\sigma_3$ from Example~\ref{exa: comp: MAIN: 2CAPL-totaltourlength} with
 \begin{itemize}
  \item $\capV$ consecutive uniform requests from $v_i$ to $v_{i+1}$ for each $0\leq i< \nsub$ and
  \item $\capV$ consecutive uniform requests from $v_i$ to $v_{i-1}$ for each $\nsub\geq i\geq 1$,
 \end{itemize}
 always with delay $2\cdot d(v_0,y_j)$ resp. $2\cdot d(x_j,v_0)$ between the release dates of any two requests $r_j$ and $r_{j+1}$ within these subsequences. We obtain (as in Example~\ref{exa: comp: MAIN: 2CAPL-totaltourlength}) that
 \[MAIN(\sigma_1)=MAIN(\sigma_3)=2\cdot \capV \sum_{1 \leq i \leq \nsub} i =\capV \cdot \abs{\subline} \cdot (\abs{\subline}+1). \]
This finally leads to
\[
 \frac{MAIN(\sigma)}{OPT(\sigma)}= \frac {2\cdot \capV \cdot\abs{\subline}\cdot (\abs{\subline}+1)+(\kp \cdot 2\cdot \capV \cdot2\abs{\subline})}{2(\abs{\subline} +\kp)}
 =\frac {2\cdot \capV \cdot\abs{\subline}\cdot (\abs{\subline}+1+2\kp)}{2(\abs{\subline} +\kp)}\]
 \[=\capV \cdot\abs{\subline}+ \frac{(1+\kp)}{\abs{\subline}+\kp}\capV \cdot\abs{\subline}\underset{\kp\to +\infty}{\longrightarrow} 2\capV \cdot \abs{\subline}
\]
as the maximum possible ratio between $MAIN(\sigma)$ and $OPT(\sigma)$ taken over all possible sequences on a line $\subline$. 
\hfill$\square$
\end{proof}

Concerning the lunch scenario, we may consider VIPAs operating in elevator mode on lines, where each line has the restaurant as its distinguished origin. 
A sequence $\sigma'$ containing the first $\capV$ requests of the first block $\sigma_1$ and the last $\capV$ requests from the third block $\sigma_3$ from the sequence 
presented in Example~\ref{exa: comp: MAIN: 2CAPL-totaltourlength} shows that $2 \cdot \capV$ is a lower bound on the competitive ratio of MAIN.
As for the morning resp. evening scenario, we may consider VIPAs operating in elevator mode on lines, where each line has a parking as its distinguished origin. A sequence $\sigma'$ containing the first $\capV$ requests 
of the first block $\sigma_1$ resp. the last $\capV$ requests from the third block $\sigma_3$ from the sequence 
presented in Example~\ref{exa: comp: MAIN: 2CAPL-totaltourlength} shows that $\capV$ is a lower bound on the competitive ratio of MAIN.
We can show that these examples are the worst cases for MAIN during lunch, morning and evening:

\begin{theorem}\label{th-MAIN-cap-lunch}
For one VIPA with capacity $\capV$ operating in elevator mode on a line, MAIN is w.r.t. the objective of minimizing the total tour length
\begin{itemize}
 \item $2 \cdot \capV$-competitive during the lunch scenario,
 \item $\capV$-competitive during the morning resp. the evening scenario. 
\end{itemize}
\end{theorem}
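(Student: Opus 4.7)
The plan is to reuse the argument for Theorem~\ref{th-MAIN-cap-general} but sharpened by the fact that, in each of the three scenarios, every request has at least one endpoint equal to the origin $v_0$ of the line (a parking for morning/evening, the restaurant for lunch). The lower bounds have already been witnessed by the restricted subsequences described immediately before the statement, so only the matching upper bounds remain.

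Let $d_j := d(v_0, w_j)$, where $w_j$ denotes the non-$v_0$ endpoint of request $r_j$. The first step is to bound MAIN from above: since the worst-case behavior of MAIN is to perform, for each individual passenger of $r_j$, a dedicated round trip $v_0 \to w_j \to v_0$ of length $2 d_j$ (exactly as in the analysis of Theorem~\ref{th-MAIN-cap-general}, and without any opportunity for MAIN to go further since $w_j$ is the furthest point relevant to an isolated $r_j$), we obtain
\[
 MAIN(\sigma) \;\leq\; \sum_{r_j \in \sigma} z_j \cdot 2 d_j.
\]

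The second step is to lower bound $OPT(\sigma)$ by decomposing its schedule into excursions from $v_0$. Each passenger $p$ of a request $r_j$ is carried on some excursion $E_p$ of length $\abs{E_p} \geq 2 d_j$, and the maximum number of passengers per excursion is $\capV$ for the morning scenario (up-passengers loaded at $v_0$; the outbound capacity is the bottleneck), $\capV$ for evening (down-passengers delivered to $v_0$; the return capacity is the bottleneck), and $\capV + \capV = 2\capV$ for lunch (up to $\capV$ up-passengers on the outbound leg plus up to $\capV$ down-passengers on the return leg). Letting $\kappa$ denote this scenario-specific bound and summing over passengers,
\[
 \sum_p 2 d_p \;\leq\; \sum_p \abs{E_p} \;\leq\; \kappa \cdot OPT(\sigma),
\]
and combining with the bound on MAIN yields $MAIN(\sigma) \leq \kappa \cdot OPT(\sigma)$, matching the lower bounds with $\kappa = \capV$ for morning/evening and $\kappa = 2\capV$ for lunch.

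The delicate step will be justifying the lunch capacity budget of $2\capV$ per excursion, since the VIPA has capacity only $\capV$: the key observation is that OPT can always reschedule an excursion so that up- and down-passengers never compete for capacity on the same arc, by loading all up-passengers at $v_0$ and dropping them during the outbound leg, and by picking up all down-passengers on the return leg and delivering them at $v_0$; this decouples the outbound and return capacity budgets, each worth $\capV$. For morning and evening the situation is simpler, since only one direction carries load and the bound of $\capV$ follows immediately from the capacity constraint.
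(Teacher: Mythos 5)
Your proposal is correct, and it actually supplies a rigorous upper-bound argument where the paper gives a more informal one. The paper's proof proceeds by exhibiting the extremal request sequences (all requests between $v_0$ and $v_1$, released with large delays so that MAIN pays $2d(v_0,v_1)$ per passenger while OPT batches $\capV$ resp.\ $2\capV$ passengers into one round trip) and then asserts that the resulting ratio is ``the maximum possible ratio taken over all possible sequences''; the step that every other sequence does no better is argued only by an informal optimization over the adversary's choices. Your excursion decomposition is precisely the missing formal ingredient: charging each passenger $2d_j\le\abs{E_p}$ against the excursion of OPT that carries it, and bounding the number of passengers per excursion by $\kappa$, yields $MAIN(\sigma)\le\sum_p 2d_p\le\kappa\cdot OPT(\sigma)$ for \emph{every} sequence, with the same two quantities (MAIN's per-passenger cost, OPT's per-excursion passenger budget) that drive the paper's computation. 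Two small remarks. First, your ``delicate step'' for lunch is not actually needed for the upper bound: that an excursion carries at most $2\capV$ passengers follows directly from the capacity constraint on the arc leaving $v_0$ (at most $\capV$ up-passengers, all of whom board at $v_0$) and on the arc entering $v_0$ (at most $\capV$ down-passengers, all of whom alight at $v_0$); the rescheduling observation is only relevant to tightness, which the explicit sequences already witness. Second, your charging argument implicitly assumes each passenger is carried within a single excursion of OPT; if OPT were allowed to be load-preemptive one would assign the passenger to the excursion visiting its non-origin endpoint, but the paper glosses over this as well, so it is not a gap relative to the paper's standard of rigor.
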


\begin{proof}
The worst transportation schedule results if the VIPA operated by MAIN performs a separate subtour serving a single uniform request $r_j=(t_j,v_0,v_1,1)$ or  $r_j=(t_j,v_1,v_0,1)$
each time the VIPA leaves the origin $v_0$ of the line,
  yielding $\sum_{ r_j \in \sigma}2d(v_0,v_1)$ as total tour length. 
MAIN can indeed be forced to show this behavior by releasing the requests accordingly (i.e. by using requests with $z_j=1$ each and with sufficiently large delay between 
  $t_j$ and $t_{j+1}$).
In order to maximize the ratio between the total tour length obtained by MAIN and the optimal offline solution, we need to ensure that 
 \begin{itemize} 
 \item[$\bullet$] we do not have a move from or to the origin with a load less than the capacity $\capV$ of the VIPA in the transportation schedule of $OPT$. For that, the adversary releases
 \subitem during the lunch $\capV$ many requests traversing the same arc.
Whereas MAIN traverses $d(v_0,v_1)$ twice to serve a request $r_j=(t_j,v_0,v_1,1)$ or  $r_j=(t_j,v_1,v_0,1)$, OPT travels $d(v_0,v_1)$ once to serve the request and can share it
with $\capV -1$ others.
\subitem during the morning/evening $\capV$ many requests traversing the same arc.
Whereas MAIN traverses $d(v_0,v_1)$ twice to serve a request $r_j=(t_j,v_0,v_1,z_j)$ resp.  $r_j=(t_j,v_1,v_0,z_i)$, OPT travels the same distance to serve the request but can share it
with $\capV -1$ others.
 \item[$\bullet$] all requests in $\sigma$ can be served with as few and as short subtours as possible in OPT. For that, the adversary releases
 \subitem during the lunch a sequence $\sigma$ of $2\capV$ requests:
 $\capV$ many requests $v_0 \to v_1$ followed by
 $\capV$ many requests $v_1 \to v_0$. Therefore we obtain 
  \[ MAIN(\sigma)=\sum_{r_j\in\sigma}2d(v_0,v_1)=2\cdot \capV \cdot 2d(v_0,v_1) \text{ and } OPT(\sigma)=2d(v_0,v_1)\]
\[\text{s.t. } \frac{ MAIN(\sigma)}{OPT(\sigma)}=\frac{2\cdot \capV \cdot 2d(v_0,v_1)}{2d(v_0,v_1)}=2\capV\]
is the maximum possible ratio between $MAIN(\sigma)$ and $OPT(\sigma)$ taken over all possible sequences on a line during the lunch.
\subitem during the morning/evening a sequence $\sigma$ of $\capV$ requests:
$\capV$ many requests $v_0 \to v_1$ resp. $\capV$ many requests $v_1 \to v_0$. Therefore we obtain 
  \[ MAIN(\sigma)=\sum_{r_j\in\sigma} 2d(v_0,v_1)=\capV \cdot 2d(v_0,v_1) \text{ and } OPT(\sigma)=2d(v_0,v_1)\]
\[\text{s.t. } \frac{ MAIN(\sigma)}{OPT(\sigma)}=\frac{\capV \cdot 2d(v_0,v_1)}{2d(v_0,v_1)}=\capV\]
is the maximum possible ratio between $MAIN(\sigma)$ and $OPT(\sigma)$ taken over all possible sequences on a line during the morning resp. evening.
\hfill$\square$
\end{itemize}
\end{proof}

\subsection{Computational results}
This section deals with computational experiments for the proposed online algorithms.
In fact, due to the very special request structures of the previously presented worst case instances, we can expect a better behavior of the proposed online algorithms in average.
The computational results presented in Table~\ref{tab: computational results_tram} and~\ref{tab: computational results_elevator}  support this expectation, they compare the total tour length $TTL$ computed by the online algorithms
with the optimal offline solution. 
The computations use instances based on the network from the industrial site of Michelin and randomly generated request sequences resembling typical instances that occurred during the 
experimentation \cite{RFIA2016}.The computations are performed with the help of a simulation tool developed by Yan Zhao~\cite{ZHAO}.
The instances use subnetworks as a circuit or a line depending on the mode and the algorithm 
with 1 to 5 VIPAs, 5-200 requests, 1-12 as the maximum load $z_j$ of a request.
For every parameter set we created 6 test instances. In these tables, the instances are grouped by the number of requests and the capacity of the VIPA. The average results of the instances are shown. 
The operating system for all tests is Linux (CentOS with kernel version 2.6.32).
The algorithms SIR, $SIF_L$, MAIN and $\OPTTRAM$ have been implemented in Java.
For solving the integer linear program to get the optimal solution for the elevator mode, we use Gurobi 8.21.

\begin{table}[ht]
\centering
 \caption{This table shows the computational results for several test instances of the algorithms $SIR$ and $SIF_L$
    in comparison to the value of the optimal solution. In this table, the instances are grouped by the number of requests (1st column) and the capacity (2nd column). 
    Furthermore the values of the total tour length $TTL$ found by SIR with morning, evening and lunch instances are shown in comparison with the optimal solution  $OPT$ 
    and the ratio between them. Then the values of the total tour length $TTL$ found by $SIF_L$ with lunch instances are shown in comparison with the optimal solution $OPT$
    and the ratio between them. Finally, the results for instances respecting the general scenario are shown: the total tour length $TTL$ found by $SIR$, the optimal solution $OPT$
    and the ratio between them. The competitive ratio $c$ is shown for each of the scenarios, it is always greater than $\frac{TTL}{OPT}$ unless $c=\frac{TTL}{OPT}=1$. 
    In these test instances the length of the circuit used is equal 
    to $\abs{\circuit} =25$}
 \label{tab: computational results_tram}
{\scriptsize 
\begin{tabular}{|c|c|c|c|c|c|c|c|c|c|c|c|c|c|c|c|c|c|}
\hline
\multicolumn{2}{|c |}{ } & \multicolumn{3}{c |}{SIR (morning)} & \multicolumn{3}{c |}{SIR (Evening)}&\multicolumn{4}{c |}{SIR (Lunch)}& \multicolumn{2}{c |}{$SIF_L$ (Lunch)} &\multicolumn{4}{c| }{SIR (General)} \\
\multicolumn{2}{|c |}{ } & \multicolumn{3}{c |}{$c=\capV$} & \multicolumn{3}{c |}{$c=\capV$}&\multicolumn{4}{c |}{$c=2\capV$}& \multicolumn{2}{c |}{$c=2$} &\multicolumn{4}{c| }{$c=\capV\cdot\abs{\circuit}$ } \\
\hline
$\nreq$& $\capV$& $TTL$& $OPT$& $\frac{TTL}{OPT}$& $TTL$& $OPT$& $\frac{TTL}{OPT}$& $TTL$& $OPT$& $\frac{TTL}{OPT}$&$c$& $TTL$& $\frac{TTL}{OPT}$& $TTL$& $OPT$& $\frac{TTL}{OPT}$&$c$ \\
\hline
5&1&125&125&1,00&125&125&1,00&112,5&93,75&1,20&2&100&1,07&143,75&106,25&1,35&25\\
5&5&50&25&2,00&100&25&4,00&75&25&3,00&10&25&1,00&131,25&25&5,25&125\\
5&10&50&25&2,00&81,25&25&3,25&68,75&25&2,75&20&25&1,00&87,5&25&3,50&250\\
20&1&500&500&1,00&500&500&1,00&418,75&337,5&1,24&2&400&1,19&593,75&337,5&1,76&25\\
20&5&125&100&1,25&275&100&2,75&206,25&75&2,75&10&81,25&1,08&287,5&75&3,83&125\\
20&10&106,25&50&2,13&275&50&5,50&206,25&43,75&4,71&20&56,25&1,29&268,75&50&5,38&250\\
200&1&5000&5000&1,00&5000&5000&1,00&4300&2818,75&1,53&2&2968,75&1,05&5400&2706,25&2,00&25\\
200&5&1025&1000&1,03&2881,25&1000&2,88&2125&562,5&3,78&10&600&1,07&2600&550&4,73&125\\
200&10&637,5&500&1,28&2543,75&500&5,09&1512,5&275&5,50&20&468,75&1,70&2168,75&275&7,89&250\\
\hline
\end{tabular}}
\end{table}

\begin{table}[ht]
\centering
 \caption{This table shows the computational results for several test instances of the algorithm $MAIN$ in comparison to the value of the optimal solution.
 In this table, the instances are grouped by number of requests (1st column) and the capacity (2nd column). 
 First the values of the total tour length $TTL$  found by $MAIN$ for the instances respecting the morning scenario are shown, the optimal solution  $OPT$  and the ratio between them.
 Then the results for instances respecting the evening scenario are shown: the total tour length $TTL$ found by $MAIN$ and, the optimal solution $OPT$
 and the ratio between them.
 Finally, the values of the total tour length $TTL$ found by MAIN with general instances are shown in comparison with the optimal solution $OPT$ and the ratio between them. The competitive ratio $c$ is
 shown for each of the scenarios, it is always greater than $\frac{TTL}{OPT}$ unless $c=\frac{TTL}{OPT}=1$. 
 In these test instances the length of the line used is equal to $\abs{\subline} =32$}
 \label{tab: computational results_elevator}
{\scriptsize 
\begin{tabular}{|c|c|c|c|c|c|c|c|c|c|c|c|c|c|c|c|}
\hline
\multicolumn{2}{|c |}{ } & \multicolumn{3}{c |}{MAIN (Morning)}&\multicolumn{3}{c |}{MAIN (Evening)}&\multicolumn{4}{c| }{MAIN (Lunch)}& \multicolumn{4}{c| }{MAIN (General)} \\
\multicolumn{2}{|c |}{ } & \multicolumn{3}{c |}{$c=\capV$}&\multicolumn{3}{c |}{$c=\capV$}&\multicolumn{4}{c| }{$c=2\capV$}& \multicolumn{4}{c| }{$c=2\capV\cdot\abs{\subline}$} \\
\hline
$\nreq$& $\capV$& $TTL$& $OPT$& $\frac{TTL}{OPT}$ &  $TTL$& $OPT$& $\frac{TTL}{OPT}$ & $TTL$& $OPT$& $\frac{TTL}{OPT}$&$c$& $TTL$& $OPT$& $\frac{TTL}{OPT}$&$c$ \\
\hline
5&1&96&96&1,00&103,5&103,5&1,00&49,5&49,5&1,00&2&57&35&1,63&64\\
5&5&50,5&24,5&2,06&77&27&2,85&48&27&1,78&10&60&27&2,22&320\\
5&10&54&28&1,93&97&28&3,46&64,5&27&2,39&20&63,5&27&2,35&640\\
20&1&319&326&0,98&352&359&0,98&225,5&177&1,27&2&211&172&1,23&64\\
20&5&134,5&72,5&1,86&315&72,5&4,34&194,5&57,5&3,38&10&186,5&43&4,34&320\\
20&10&107&46,5&2,30&316&41,5&7,61&160,5&34&4,72&20&216&33,5&6,45&640\\
200&1&3253,5&3393,5&0,96&3304,5&3426,5&0,96&2240&1855,5&1,21&2&1713&1435,5&1,19&64\\
200&5&1049&693,5&1,51&3317,5&695&4,77&1560&368,5&4,23&10&1171&292&4,01&320\\
200&10&859,5&361&2,38&3359,5&355&9,46&1679&185&9,08&20&1085&152,5&7,11&640\\
\hline
\end{tabular}
}
\end{table}
\section{ Concluding Remarks} 

Vehicle routing problems integrating constraints on autonomy are new in the field of operational research but are important for the future mobility.
Autonomous vehicles, which are intended to be used as a fleet in order to provide a transport service, need to be effective also considering to their management.
We summarize the results of competitive analysis presented in this paper in Table~\ref{tab:  competitive results}.
\begin{table}[ht]
\centering
 \caption{This table shows the competitive ratios obtained in this paper. For each case we show the competitive ratio for the algorithm on which type of graphs, 
 using which mode. 
}
 \label{tab: competitive results}
{\scriptsize 
\begin{tabular}{|c|c|c|c|c|c|c|c|c|}
\hline
Mode (type of graph) & Algorithm & General& Lunch &Morning& Evening \\
\hline
Tram mode (Circuit)& SIR & $\capV \cdot|C|$ & $2\capV$ & $\capV$ & $\capV$\\
Tram mode (Circuit)& $SIF_M$ & - & - & 1& -\\
Tram mode (Circuit)& $SIF_E$ & - & -& - & 1  \\
Tram mode (Circuit)& $SIF_L$ & - & 2& - & -  \\
Elevator mode (Line)& MAIN & $2\capV\cdot|L|$ & $2\capV$ & $\capV$ & $\capV$ \\
\hline
\end{tabular}}
\end{table}
Competitive analysis has been one of the main tools for deriving worst-case bounds on the performance of algorithms but an online algorithm having the best competitive ratio in
theory may reach the worst case more frequently in practice with a certain topology.
That is the reason why we are not only  interested in the ``worst-case'' but also in the ``best-case''  performance of the algorithms, 
thus we need to determine properties which govern the behavior of each chosen algorithm and define the cases where it can be applied and give the best results in terms of performance.
So far, we can suggest the following:
\begin{itemize}
\item Morning/evening: partition the network  into disjoint circuits as subnetworks such that each subnetwork contains one parking $p$, assign one or several 
VIPA to every circuit operating in tram mode using $SIF_M$ resp. $SIF_L$.
\item Lunch time: consider a collection of circuits all meeting in a central station (the restaurant), one or several VIPAs on each circuit operating in tram mode using $SIF_L$.
\item in general: consider a spanning collection of lines and circuits meeting in a central station where one VIPA (in elevator mode) operates on each line using $MAIN$, one or several VIPAs (in tram mode) on each circuit using $SIR$.
 \end{itemize}
The future works are to analyze the proposed scenarios and algorithms further, e.g., by
\begin{itemize}
\item providing competitive ratios w.r.t. the objective of minimizing the makespan,
\item studying also the quality of service aspect by minimizing the waiting time for customers,
\item providing solution approaches for the taxi mode.
\end{itemize}

\bibliographystyle{hplain}
\providecommand{\availatURL}[1]{\ignorespaces \footnote{Avail. at URL
  \texttt{#1}}} \providecommand{\NP}{\textsf{NP}}
\providecommand{\bysame}{\leavevmode\hbox to3em{\hrulefill}\thinspace}
\providecommand{\MR}{\relax\ifhmode\unskip\space\fi MR }
\providecommand{\MRhref}[2]{%
  \href{http://www.ams.org/mathscinet-getitem?mr=#1}{#2}
}
\providecommand{\href}[2]{#2}
\bibliography{bib2015}

\end{document}